\newtheorem{theorem}{Theorem}[section]
\newtheorem{lemma}[theorem]{Lemma}
\newtheorem{proposition}[theorem]{Proposition}
\newtheorem{corollary}[theorem]{Corollary}
\newtheorem{definition}[theorem]{Definition}
\newtheorem{example}[theorem]{Example}
\newtheorem{remark}[theorem]{Remark}
\newcommand{\findent}{\hspace{1em}}
\newlength{\nullrellength}
\newcommand{\nullrel}{\:\hspace{\nullrellength}\:}
\newcommand{\breakop}[1]{\hspace{-0.2222em}{}#1}
\newcommand{\relvar}[2]{\buildrel {#2} \over {#1}}
\newcommand{\eqvar}[1]{\relvar{=}{\mathrm{#1}}}
\newcommand{\eqdef}{\eqvar{\scriptscriptstyle\triangle}}
\newcommand{\nnintegers}{\mathbb{N}_0}
\newcommand{\complex}{\mathbb{C}}
\newcommand{\unitsof}[1]{{#1^{*}}}
\newcommand{\nzcomplex}{\unitsof{\complex}}
\newcommand{\field}[1]{\mathbb{F}_{#1}}
\newcommand{\nzfield}[1]{\unitsof{\field{#1}}}
\newcommand{\symgroup}[1]{\mathrm{S}_{#1}}
\newcommand{\seq}[1]{\mathbf{#1}} % vector or sequence
\newcommand{\mat}[1]{\mathbf{#1}} % matrix
\newcommand{\matentry}[1]{\mathrm{#1}} % entry of matrix
\newcommand{\orth}[1]{#1^\perp} % orthogonal set
\newcommand{\weight}{\mathrm{w}} % weight
\newcommand{\sow}{\mathrm{w}_2} % second-order weight
\newcommand{\sop}{\mathcal{S}} % second-order partition
\newcommand{\sope}{S} % element of second-order partition
\newcommand{\coef}{\mathrm{coef}}
\newcommand{\tr}{\mathrm{Tr}}
\newcommand{\repmap}{f^{\mathrm{rep}}}
\newcommand{\repcode}{C^{\mathrm{rep}}}
\newcommand{\chkmap}{f^{\mathrm{chk}}}
\newcommand{\chkcode}{C^{\mathrm{chk}}}
\newcommand{\ldmapA}{F^{\mathrm{ld1}}}
\newcommand{\ldcodeA}{\mathcal{C}^{\mathrm{ld1}}}
\newcommand{\ldmapB}{F^{\mathrm{ld2}}}
\newcommand{\ldcodeB}{\mathcal{C}^{\mathrm{ld2}}}
\newcommand{\rlccode}{\mathcal{C}^{\mathrm{rlc}}}
\newcommand{\rlccodeB}{\mathcal{C}^{\mathrm{rlc2}}}
\newcommand{\rmat}[1]{\tilde{\mathbf{#1}}} % random matrix
\newcommand{\matset}[1]{\mathcal{#1}} % set of matrices
\begin{document}

\title{Second-Order Weight Distributions}
\author{Shengtian~Yang,~\IEEEmembership{Member,~IEEE}
\thanks{S. Yang is self-employed at Zhengyuan Xiaoqu 10-2-101, Hangzhou 310011, China (email: yangst@codlab.net).}
}

\maketitle

% The paper headers
\markboth{Accepted for Publication in IEEE Transactions on Information Theory (Version: \today)}{}

% If you want to put a publisher's ID mark on the page
% (can leave text blank if you just want to see how the
% text height on the first page will be reduced by IEEE)
%\pubid{0000--0000/00\$00.00~\copyright~2002 IEEE}

\maketitle

\begin{abstract}
A fundamental property of codes, the second-order weight distribution, is proposed to solve the problems such as computing second moments of weight distributions of linear code ensembles. A series of results, parallel to those for weight distributions, is established for second-order weight distributions. In particular, an analogue of MacWilliams identities is proved. The second-order weight distributions of regular LDPC code ensembles are then computed. As easy consequences, the second moments of weight distributions of regular LDPC code ensembles are obtained. Furthermore, the application of second-order weight distributions in random coding approach is discussed. The second-order weight distributions of the ensembles generated by a so-called $2$-good random generator or parity-check matrix are computed, where a $2$-good random matrix is a kind of generalization of the uniformly distributed random matrix over a finite filed and is very useful for solving problems that involve pairwise or triple-wise properties of sequences. It is shown that the $2$-good property is reflected in the second-order weight distribution, which thus plays a fundamental role in some well-known problems in coding theory and combinatorics. An example of linear intersecting codes is finally provided to illustrate this fact.
\end{abstract}

\begin{keywords}
Low-density parity-check (LDPC) codes, MacWilliams identities, random linear codes, second moments, weight distributions.
\end{keywords}

\IEEEpeerreviewmaketitle

\section{Introduction}\label{sec:Introduction}

\IEEEPARstart{T}{he} weight distribution is an important property of a code. Knowledge about the weight distribution of a linear code is very useful for estimating decoding error probabilities \cite{Gallager196300} and covering radius \cite{Blinovsky199700}, etc.

Let $\field{q}$ be the finite field of order $q = p^r$, where $p$ is prime and $r \ge 1$. Throughout this paper, all codes considered will be over $\field{q}$. For a codeword $\seq{c} \in \field{q}^n$, its (Hamming) weight $\weight(\seq{c})$ is the number of nonzero symbols in $\seq{c}$. For a code $C \subseteq \field{q}^n$, the weight distribution $A_i(C)$ is the number of codewords of weight $i$ in $C$. The polynomial $W_C(x) \eqdef \sum_{i=0}^n A_i(C)x^i$ is called the weight enumerator of $C$.

In general, it is hard to compute the weight distribution of a specific code. On the other hand, for some ensembles of codes, we can compute their average weight distributions. For example, we have known the average weight distribution of various low-density parity-check (LDPC) code ensembles \cite{Gallager196300, Litsyn200204, Litsyn200312, Burshtein200406, Di200611, Richardson200800, Flanagan200911, Bennatan200403, Como2008, Yang201010}, and even the second moment of weight distributions of some specific LDPC code ensembles, e.g., \cite{Rathi200509, Barak200509, Rathi200609, Barak200711}. While the average weight distribution provides by Markov's inequality an upper bound of the weight distribution of codes in the ensemble with high probability, the degree to which individual codes have a weight distribution close to this average depends on second- or higher-order statistics of the random weight distribution. Typically, given a random code $\mathcal{C}$ (a measurable mapping from some abstract probability space to the power set of $\field{q}^n$), one computes the second moment $E[(A_i(\mathcal{C}))^2]$ of the weight distribution, or the second moment $E[W_{\mathcal{C}}(x)W_{\mathcal{C}}(y)]$ of the weight enumerator, since
\[
E[(A_i(\mathcal{C}))^2] = \coef(E[W_{\mathcal{C}}(x)W_{\mathcal{C}}(y)], x^iy^i)
\]
where $\coef(p(x,y), x^iy^j)$ denotes the coefficient of $x^iy^j$ in the polynomial $p(x,y)$. Applying Chebyshev's inequality with the second moment then gives a confidence interval of the weight distribution of an individual code with respect to any given probability. It is not the only case that one needs to compute the second moment like $E[W_{\mathcal{C}}(x)W_{\mathcal{C}}(y)]$. When estimating the variance of undetected error probability of an error detection scheme (see e.g. \cite[p.~99]{Moon200500} and \cite{Wadayama201005}) which is expressed in terms of a random weight enumerator $W_{\mathcal{C}}(x)$, one also needs to compute the second moment $E[(W_{\mathcal{C}}(x))^2]$ or $E[A_i(\mathcal{C})A_j(\mathcal{C})]$. Then one question arises: how can we compute these second moments? For this question, there has been some work for some specific code ensembles, e.g., the second moment of the weight distribution of a binary regular LDPC code ensemble \cite{Rathi200509, Barak200509, Rathi200609}, the covariance of the weight distribution of a linear code ensemble characterized by the so-called Bernoulli parity-check matrix ensemble \cite{Wadayama201005}, and the second moment of the weight distribution of a random linear code generated by a uniform random generator matrix \cite{Blinovsky200912}. However, no systematic approach has ever been established to facilitate such kinds of computation.

To establish a systematic approach, we need a fundamental property of linear codes, which not only yields the second moment of weight distribution but also supports easy computation for various combinations of linear codes. Unfortunately, the distribution $A_i(\mathcal{C})^2$ or $A_i(\mathcal{C})A_j(\mathcal{C})$ are not qualified for this position. When $\mathcal{C}$ is not random, it is clear that $E[A_i(\mathcal{C})A_j(\mathcal{C})]$ provides no more information than does $E[A_i(\mathcal{C})]$. On the other hand, for a general random $\mathcal{C}$, the information contained in $E[A_i(\mathcal{C})A_j(\mathcal{C})]$ is too coarse to support the computation of serially concatenated codes, even if an analogue of input-output weight distribution (see e.g. \cite{Divsalar199809}) is introduced. Recall that a linear code is the kernel or image of a linear transformation. Then most kinds of combinations of linear codes can be expressed as a series of two basic operations of linear transformations, namely, the composition (serial concatenation) and the Cartesian product (parallel concatenation).

Motivated by the question above, we provide in this paper a novel property of codes, called second-order weight distributions. From the viewpoint of group actions on sets, the second-order weight is a partition induced by the group of all monomial maps acting on the set $\field{q}^n \times \field{q}^n$, so it is a natural extension of weight, which is a partition induced by the same group acting on $\field{q}^n$. A series of results, parallel to those for weight distributions, is established for second-order weight distributions. In particular, an analogue of MacWilliams identities is proved. Equipped with this new tool, we compute the second-order weight distributions of regular LDPC code ensembles. As easy consequences, we obtain the second moments of weight distributions of regular LDPC code ensembles, which include the results of \cite{Rathi200509, Barak200509, Rathi200609} as special cases. Furthermore, we discuss the application of second-order weight distributions in random coding approach. We compute the second-order weight distributions of the ensembles generated by a so-called $2$-good random generator or parity-check matrix. A $2$-good random matrix is a kind of generalization of the uniformly distributed random matrix and is very useful for solving problems that involve pairwise or triple-wise properties of sequences. We show that the $2$-good property is reflected in the second-order weight distribution, which thus plays a fundamental role in some well-known problems in coding theory and combinatorics. An example of linear intersecting codes is finally provided to illustrate this fact.

The rest of this paper is organized as follows. In Section~\ref{sec:Theory}, we establish the method of second-order weight distributions. In Section~\ref{sec:Applications}, we compute the second-order weight distributions of regular LDPC code ensembles. The application of second-order weight distributions in random coding approach is discussed in Section~\ref{sec:RandomCoding}. Section~\ref{sec:Conclusion} concludes the paper.

In the sequel, the symbols $\nnintegers$, $\complex$, $\symgroup{n}$ denote the set of nonnegative integers, the field of complex numbers, and the group of all permutations on $n$ letters, respectively. The multiplicative subgroup of nonzero elements of $\field{q}$ (resp. $\complex$) is denoted by $\nzfield{q}$ (resp. $\nzcomplex$). A vector in $\field{q}^n$ is typically denoted in the row-vector form $\seq{v} = (v_1, v_2, \cdots, v_n)$. The canonical projection $\pi_i: \field{q}^n \to \field{q}$ is given by $\seq{v} \mapsto v_i$. In general, for an element in a set $A^I$, we adopt a similar notation such as $\seq{v} = (v_i)_{i \in I}$ where $v_i \in A$, and the canonical projection $\pi_i: A^I \to A$ with $i \in I$ is given by $\seq{v} \mapsto v_i$. Given $\seq{u} \in A^I$ and $\seq{v} \in B^I$, if the product $\prod_{i \in I} u_i^{v_i}$ makes sense, we write $\seq{u}^\seq{v}$ as the shortening. For any set $A$ and its subset $B$, the indicator function $1_B: A \to \{0,1\}$ is given by $x\mapsto 1$ for $x\in B$ and $x\mapsto 0$ for $x\not\in B$. When the expression of $B$ is long, we write $1B$ in place of $1_B(x)$. Nonrandom codes are denoted by capital letters, while random codes are denoted by script capital letters. Matrices are denoted by boldface capital letters. By a tilde we mean that a matrix such as $\rmat{A}$ is random. Unless stated otherwise, distinct random elements are assumed to be independent.

\section{Second-Order Weight Distributions}\label{sec:Theory}

In order to find a fundamental property that underlies the second moment of weight distributions, let us first take a close look at the product $A_i(C)A_j(C)$, which may be rewritten as
\[
A_i(C)A_j(C) = \sum_{\seq{u}: \weight(\seq{u}) = i} \sum_{\seq{v}: \weight(\seq{v}) = j} 1\{(\seq{u}, \seq{v}) \in C \times C\}.
\]
Also recall the definition of the weight distribution
\[
A_i(C) = \sum_{\seq{u}: \weight(\seq{u}) = i} 1\{\seq{u} \in C\}.
\]
It is then reasonable to guess that the fundamental property that we are seeking may be a sum of $1\{(\seq{u}, \seq{v}) \in C \times C\}$ over some set of vector pairs. More specifically, let $\mathcal{P}$ be a partition of $\field{q}^n \times \field{q}^n$, and then the quantity
\[
A_P(C, C) \eqdef \sum_{(\seq{u}, \seq{v}) \in P} 1\{(\seq{u}, \seq{v}) \in C \times C\}
\quad \mbox{for $P \in \mathcal{P}$}
\]
gives a kind of property of $C$. Whenever $\mathcal{P}$ is a refinement of the partition $\mathcal{Q} \eqdef \{Q(i,j) \eqdef \{(\seq{u}, \seq{v}): \weight(\seq{u}) = i, \weight(\seq{v}) = j\}\}$, $A_P(C,C)$ can readily yield $A_i(C)A_j(C)$.

One obvious choice of $\mathcal{P}$ is the finest partition of $\field{q}^n \times \field{q}^n$, i.e., $\mathcal{O} \eqdef \{\{(\seq{u}, \seq{v})\}\}_{\seq{u} \in \field{q}^n, \seq{v} \in \field{q}^n}$. However, the partition $\mathcal{O}$ contains so much information that the complexity of induced formulas grows out of control as $n$ increases. On the other hand, as we have shown in Section~\ref{sec:Introduction}, the coarsest partition $\mathcal{Q}$ itself is not qualified, because it contains no enough information. Then our task is now to find an appropriate partition between $\mathcal{O}$ and $\mathcal{Q}$.

A similar story has ever happened on the weight distribution. In order to find the answer, we shall briefly review the reason why the weight distribution is so fundamental.

Let $\rmat{G}$ be the random matrix uniformly distributed over the set $\field{q}^{m \times n}$ of all $m\times n$ matrices over $\field{q}$. It is well known that the linear code ensembles $\{\seq{u}\rmat{G}: \seq{u} \in \field{q}^m\}$ and $\{\seq{u} \in \field{q}^n: \rmat{G}\seq{u}^T = \seq{0}\}$ are both good for channel coding \cite{Barg200209, Gallager196300}. Moreover, the application of $\rmat{G}$ is not confined in channel coding. It also turns out to be good for Slepian-Wolf coding \cite{Csiszar198207}, lossless joint source-channel coding (lossless JSCC) \cite{Yang200904}, and so on.

The success of $\rmat{G}$ in information theory exclusively depends on its fundamental property:
\begin{equation}\label{eq:StrongSCCGood}
P\{F(\seq{u}) = \seq{v}\} = q^{-n}, \qquad \forall \seq{u} \in \field{q}^m \setminus \{\seq{0}\}, \seq{v} \in \field{q}^n
\end{equation}
where $F(\seq{u}) \eqdef \seq{u} \rmat{G}$. In fact, any random linear transformations satisfying \eqref{eq:StrongSCCGood} has the same performance as $F$ for channel coding, lossless JSCC, etc. We may call such random linear transformations \emph{good random linear transformations}.

One important property of good random linear transformations is that both $f \circ F$ and $F \circ g$ are good for any bijective linear transformations $f: \field{q}^n \to \field{q}^n$ and $g: \field{q}^m \to \field{q}^m$. In particular, a good random linear transformation is preserved under a special class of mappings called monomial maps \cite[Sec.~1.7]{Huffman200300}. Let $\seq{c} \in (\nzfield{q})^n$ and $\sigma \in \symgroup{n}$. We define the \emph{monomial map} $\xi_{\seq{c},\sigma}: \field{q}^n \to \field{q}^n$ by
\begin{equation}\label{eq:Permutation1}
\xi_{\seq{c},\sigma}(\seq{v}) \eqdef (c_1 v_{\sigma^{-1}(1)}, c_2 v_{\sigma^{-1}(2)}, \ldots, c_n v_{\sigma^{-1}(n)}).
\end{equation}
Furthermore, we define the \emph{uniform random monomial map} $\Xi_n$ as a random mapping uniformly distributed over the set of all monomial maps. Then given a linear transformation $f: \field{q}^m \to \field{q}^n$, we define the randomization operator
\begin{equation}\label{eq:Operator}
\mathcal{R}(f) \eqdef \Xi_n \circ f \circ \Xi_m.
\end{equation}
Note that, according to our convention, $\Xi_m$ and $\Xi_n$ are independent. It is clear that for any good random linear transformation $F$,
\[
P\{\mathcal{R}(F)(\seq{u}) = \seq{v}\} = q^{-n} \qquad \forall \seq{u} \in \field{q}^m \setminus \{\seq{0}\}, \seq{v} \in \field{q}^n.
\]
This implies that $\mathcal{R}$ does not deteriorate the average performance of the ensemble. Moreover, the new ensemble $\mathcal{R}(F)$ gets larger than $F$ and has more symmetries, which facilitate the analysis of codes. Proceeding with this notion, we may consider such a coding system, where all linear transformations are randomized by independent operators $\mathcal{R}$. It is not a new idea. Both Turbo codes \cite{Berrou199305} and LDPC codes are constructed by this randomization technique, and the analysis of weight distributions always enjoys such code ensembles.

Let $M_n$ be the set of all monomial maps. Then under function composition, $M_n$ forms a group (called a \emph{monomial group}) that acts on $\field{q}^n$.\footnote{A group $G$ is said to act on a set $S$, if there is a function (or action) $G\times S \to S$ such that for all $x \in S$ and $g_1, g_2 \in G$, $ex = x$ and $(g_1g_2)x = g_1(g_2x)$, where $e$ is the identity element of $G$. For a given $x \in S$, the set $\{gx: g \in G\}$ is called the \emph{orbit} of $x$ and is denoted by $\overline{x}$. For different $x, y \in S$ with $\overline{x} \cap \overline{y} \ne \varnothing$, we have $\overline{x} = \overline{y}$, so all sets $\overline{x}$ for $x \in S$ form a partition of $S$. For details, the reader is referred to \cite[Sec.~II.4]{Hungerford197400}.} This notion then establishes the relation between $M_n$ and weight, that is, each set of vectors with the same weight corresponds to exactly one orbit of $M_n$ on $\field{q}^n$. In other words, the weight is nothing but an identification of the orbits of $M_n$ on $\field{q}^n$. This explains why the average weight distribution of Turbo codes and LDPC codes are easier to compute than other codes not randomized by $\mathcal{R}$.

Now that the orbits of $M_n$ on $\field{q}^n$ induce the weight, it is natural to consider the orbits of $M_n$ on $\field{q}^n \times \field{q}^n$ by the action $(\xi, (\seq{u}, \seq{v})) \mapsto (\xi(\seq{u}), \xi(\seq{v}))$. As we shall see later (Lemmas~\ref{le:CartesianProduct} and \ref{le:Permutation} and Theorem~\ref{th:MacWilliams}), these orbits give an appropriate partition of $\field{q}^n\times\field{q}^n$ between $\mathcal{O}$ and $\mathcal{Q}$. Therefore, we choose this partition as the basis for defining the fundamental property of codes. As a natural extension of weight, it will be called the \emph{second-order weight}.\footnote{The second-order weight is not the unique partition satisfying our requirements. If replacing $M_n$ with the group of all permutations of coordinates, i.e., the set of all $\xi_{\seq{1}, \sigma}$ with $\sigma \in \symgroup{n}$, we shall get a finer partition, which also serves our goals and may be called the \emph{second-order complete weight or second-order spectrum}, since it is a natural extension of the complete weight or spectrum (see \cite{Bennatan200403, MacWilliams197211, Yang200904}). However, the second-order complete weight is much more complex than the second-order weight.} Now, it only remains to give a convenient identification of the orbits.

To make things easier to be understood, let us begin with $n=1$. Consider the orbits of $M_1$ (i.e., $\nzfield{q}$) on $\field{q}^2$. There are totally $q+2$ orbits in $\field{q}^2$. We denote by $\sop$ the set of all orbits. It is clear that $\sop$ consists of the zero subspace and $q+1$ one-dimensional subspaces of $\field{q}^2$ with $(0,0)$ excluded. For each orbit $S \in \sop$, we also denote it by $\overline{v}$ for any chosen representative $v \in S$. The standard representative $\rho(S)$ of an orbit $S \in \sop$ is defined as its unique element whose first nonzero component is $1$, or $(0,0)$ for the special orbit $\{(0,0)\}$. For convenience, we also define some special elements and subsets of $\sop$:
\begin{IEEEeqnarray*}{l}
\sope_{00} \eqdef \overline{(0,0)}, \quad \sope_{01} \eqdef \overline{(0,1)}, \quad \sope_{10} \eqdef \overline{(1,0)} \\
\sop_{00} \eqdef \{\sope_{00}\}, \quad \sop_{11} \eqdef \sop\setminus\{\sope_{00}, \sope_{01}, \sope_{10}\}.
\end{IEEEeqnarray*}

\begin{example}
When $q = 3$, we have
\begin{IEEEeqnarray*}{rCl}
\sop
&= &\{\{(0,0)\}, \{(0,1), (0,2)\}, \{(1,0), (2,0)\},\\
& &\{(1,1), (2,2)\}, \{(1,2), (2,1)\}\} \\
\rho(\sop)
&= &\{(0,0), (0,1), (1,0), (1,1), (1,2)\}.
\end{IEEEeqnarray*}
\end{example}

Having introduced the basic notations for identifying every orbits of $M_1$ on $\field{q}^2$, we are now ready to formally define the second-order weight and all related concepts. Our approach simply follows a similar way in which the weight as well as the weight distribution is defined.

\begin{definition}
For any $\seq{u}, \seq{v} \in \field{q}^n$, the \emph{second-order weight} of $(\seq{u}, \seq{v})$ is a $(q+2)$-tuple defined by
\[
\sow(\seq{u}, \seq{v}) \eqdef \left(\sum_{i=1}^n 1_S(u_i, v_i)\right)_{S \in \sop}.
\]
\end{definition}

With this definition, it is easy to verify that the second-order weight parametrizes the orbits of the monomial group $M_n$ on $\field{q}^n \times \field{q}^n$. The next lemma formally states this fact.

\begin{lemma}
If the monomial group $M_n$ acts on $\field{q}^n \times \field{q}^n$ by the action $(\xi, (\seq{u}, \seq{v})) \mapsto (\xi(\seq{u}), \xi(\seq{v}))$, then the orbit of $(\seq{u}, \seq{v}) \in \field{q}^n \times \field{q}^n$ is exactly the set of all vector pairs of second-order weight $\sow(\seq{u}, \seq{v})$.
\end{lemma}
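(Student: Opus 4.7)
The plan is to prove both containments: (i) the second-order weight is constant on $M_n$-orbits, and (ii) any two pairs with the same second-order weight lie in a common orbit. Direction (i) is essentially immediate from the diagonal action, while (ii) is a matching argument using the equality of orbit-wise coordinate counts.

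For direction (i), I would fix $\xi = \xi_{\seq{c},\sigma} \in M_n$ and compute coordinate by coordinate. Since $(\xi(\seq{u}))_i = c_i u_{\sigma^{-1}(i)}$ and $(\xi(\seq{v}))_i = c_i v_{\sigma^{-1}(i)}$, the pair at position $i$ in $(\xi(\seq{u}), \xi(\seq{v}))$ equals $c_i \cdot (u_{\sigma^{-1}(i)}, v_{\sigma^{-1}(i)})$. Because $M_1 = \nzfield{q}$ acts on $\field{q}^2$ by scalar multiplication in both coordinates, this new pair lies in the same orbit $S \in \sop$ as $(u_{\sigma^{-1}(i)}, v_{\sigma^{-1}(i)})$. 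Thus for every $S \in \sop$, the multiset of coordinate orbits is simply permuted by $\sigma$, so $\sum_i 1_S(u_i, v_i) = \sum_i 1_S((\xi(\seq{u}))_i, (\xi(\seq{v}))_i)$, and $\sow$ is preserved.

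For direction (ii), suppose $\sow(\seq{u}, \seq{v}) = \sow(\seq{u}', \seq{v}')$. For each $S \in \sop$, let $I_S \eqdef \{i : (u_i, v_i) \in S\}$ and $I'_S \eqdef \{j : (u'_j, v'_j) \in S\}$. The equality of second-order weights says $|I_S| = |I'_S|$ for every $S$, and since $\sop$ partitions $\field{q}^2$, the families $\{I_S\}$ and $\{I'_S\}$ are two partitions of $\{1, \ldots, n\}$ with matching block sizes. Hence there exists a permutation $\sigma \in \symgroup{n}$ with $\sigma(I_S) = I'_S$ for all $S$, equivalently, $(u_{\sigma^{-1}(j)}, v_{\sigma^{-1}(j)})$ and $(u'_j, v'_j)$ lie in the same $M_1$-orbit for every $j$.

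It then remains to choose scalars $c_j \in \nzfield{q}$ so that $\xi_{\seq{c},\sigma}$ sends $(\seq{u}, \seq{v})$ to $(\seq{u}', \seq{v}')$. For each $j$, if $(u_{\sigma^{-1}(j)}, v_{\sigma^{-1}(j)}) = (0,0)$ then also $(u'_j, v'_j) = (0,0)$, and any $c_j \in \nzfield{q}$ works. Otherwise the common orbit is a punctured line through the origin, and there is a unique $c_j \in \nzfield{q}$ with $(u'_j, v'_j) = c_j (u_{\sigma^{-1}(j)}, v_{\sigma^{-1}(j)})$; both coordinate equations are consistent because the two pairs lie on the same one-dimensional subspace. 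The resulting $\xi_{\seq{c},\sigma}$ maps $(\seq{u},\seq{v})$ to $(\seq{u}',\seq{v}')$. The only subtle point — and the nearest thing to an obstacle — is verifying that the zero pair $\sope_{00}$ must be handled separately since $c_j$ is then underdetermined, which is why we require $c_j \in \nzfield{q}$ rather than deriving it from a single coordinate equation.
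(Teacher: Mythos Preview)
Your proof is correct. The paper itself does not give a proof of this lemma: it simply asserts that ``it is easy to verify that the second-order weight parametrizes the orbits of the monomial group $M_n$ on $\field{q}^n \times \field{q}^n$'' and then states the lemma without further argument. Your two-direction argument (invariance of $\sow$ under the diagonal action, followed by a block-matching permutation plus coordinatewise scalar choice) is exactly the standard verification the paper leaves to the reader.
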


\begin{remark}
As $q=2$, the second-order weight coincides with the well-known joint weight (see e.g. \cite{MacWilliams197211, Dougherty200106}). However, they are different in general. For any $(\seq{u}, \seq{v}) \in \field{q}^n\times \field{q}^n$, the joint weight of $(\seq{u}, \seq{v})$ is a $4$-tuple $(w_{0,0}, w_{1,0}, w_{0,1}, w_{1,1})$ with
\[
w_{a,b} \eqdef \sum_{i=1}^n 1\{\weight(u_i)=a, \weight(v_i)=b\} \qquad \mbox{for $a,b = 0,1$.}
\]
From the viewpoint of group actions on sets, the joint weight is essentially an identification of the orbits of $M_n \times M_n$ on $\field{q}^n \times \field{q}^n$ by the action $((\xi, \zeta), (\seq{u}, \seq{v})) \mapsto (\xi(\seq{u}), \zeta(\seq{v}))$. Since the group $M_n$ can be embedded (as a diagonal subgroup, which is proper for $q \ge 3$) into $M_n \times M_n$ by the monomorphism $\xi \mapsto (\xi, \xi)$, the partition yielded by the second-order weight is a refinement of the partition yielded by the joint weight. Therefore, when $q \ge 3$, the second-order weight provides more information than the joint weight. For example, we can determine whether two vectors are linearly independent by their second-order weight, but not by their joint weight. Suppose that the second-order weight of $(\seq{u}, \seq{v}) \in \field{q}^n \times \field{q}^n$ is $\seq{i} = (i_S)_{S \in \sop}$. Then $\seq{u}$ and $\seq{v}$ are linearly independent if and only if
\[
\sum_{S \in \sop_{00}^c} 1\{i_S > 0\} > 1.
\]
On the other hand, consider the following two pairs of vectors in $\field{3}^3$:
\[
((1, 2, 0), (2, 1, 0)) \mbox{ and } ((1, 1, 0), (2, 1, 0)).
\]
It is clear that they have the same joint weight $(1, 0, 0, 2)$, but the first pair is linearly dependent and the second is linearly independent.
\end{remark}

Next, we proceed to define the second-order weight distribution and the second-order weight enumerator.

\begin{definition}
For any $U, V \subseteq \field{q}^n$, the \emph{second-order weight distribution} of $(U, V)$ is defined by
\[
A_{\seq{i}}(U,V) \eqdef |\{(\seq{u}, \seq{v}) \in U\times V: \sow(\seq{u}, \seq{v}) = \seq{i}\}|
\]
where $\seq{i} \in \mathcal{P}_n \eqdef \{\seq{j} \in \nnintegers^{\sop}: \sum_{S\in \sop} j_S = n\}$.
\end{definition}

\begin{definition}
For any $U, V \subseteq \field{q}^n$, the \emph{second-order weight enumerator} of $(U, V)$ is a polynomial in $q+2$ indeterminates defined by
\begin{IEEEeqnarray*}{rCl}
W_{U,V}(\seq{x})
&\eqdef &\sum_{\seq{u} \in U, \seq{v} \in V} \seq{x}^{\sow(\seq{u}, \seq{v})} \\
&= &\sum_{\seq{i} \in \mathcal{P}_n} A_{\seq{i}}(U,V) \seq{x}^{\seq{i}}
\end{IEEEeqnarray*}
where $\seq{x} \eqdef (x_S)_{S\in\sop}$.
\end{definition}

The next four lemmas give basic properties of the second-order weight distribution.

\begin{lemma}\label{le:CartesianProduct}
Let $U = U_1 \times U_2$ and $V = V_1 \times V_2$, where $U_1, V_1 \subseteq \field{q}^{n_1}$ and $U_2, V_2 \subseteq \field{q}^{n_2}$. Then
\[
W_{U,V}(\seq{x}) = W_{U_1,V_1}(\seq{x}) W_{U_2,V_2}(\seq{x}).
\]
\end{lemma}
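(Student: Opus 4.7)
The plan is to reduce the identity to a routine factorization by first observing that the second-order weight itself is additive under concatenation. Specifically, for any $\seq{u} = (\seq{u}^{(1)}, \seq{u}^{(2)}) \in \field{q}^{n_1} \times \field{q}^{n_2}$ and $\seq{v} = (\seq{v}^{(1)}, \seq{v}^{(2)}) \in \field{q}^{n_1} \times \field{q}^{n_2}$, the defining formula
\[
\sow(\seq{u}, \seq{v}) = \left(\sum_{i=1}^{n_1+n_2} 1_S(u_i, v_i)\right)_{S \in \sop}
\]
splits as a componentwise sum
\[
\sow(\seq{u}, \seq{v}) = \sow(\seq{u}^{(1)}, \seq{v}^{(1)}) + \sow(\seq{u}^{(2)}, \seq{v}^{(2)}),
\]
where the addition is coordinatewise in $\nnintegers^{\sop}$.

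Next, I would translate this additivity into a multiplicativity statement for the monomials $\seq{x}^{\sow(\cdot,\cdot)}$, since the vector $\seq{x} = (x_S)_{S \in \sop}$ of indeterminates is raised to a sum of exponents:
\[
\seq{x}^{\sow(\seq{u}, \seq{v})} = \seq{x}^{\sow(\seq{u}^{(1)}, \seq{v}^{(1)})} \cdot \seq{x}^{\sow(\seq{u}^{(2)}, \seq{v}^{(2)})}.
\]

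Finally, I would substitute this into the definition of $W_{U,V}(\seq{x})$, use the hypothesis $U \times V = (U_1 \times U_2) \times (V_1 \times V_2)$ to reindex the sum as a double sum over $(\seq{u}^{(1)}, \seq{v}^{(1)}) \in U_1 \times V_1$ and $(\seq{u}^{(2)}, \seq{v}^{(2)}) \in U_2 \times V_2$, and factor it as a product of two independent sums, each of which is by definition $W_{U_k, V_k}(\seq{x})$. There is no real obstacle here; the only care needed is to keep the bookkeeping of indices between the two factors straight, so the concatenation identification $(\seq{u}^{(1)}, \seq{u}^{(2)}) \leftrightarrow \seq{u}$ is applied consistently and no cross terms are created.
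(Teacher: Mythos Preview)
Your proposal is correct and matches the paper's own proof essentially line for line: both use the additivity of $\sow$ under concatenation to factor $\seq{x}^{\sow(\seq{u},\seq{v})}$ and then split the double sum over $U\times V$ into the product of the two enumerators. The only difference is that you make the additivity step explicit, whereas the paper absorbs it into a single equality.
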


\begin{IEEEproof}
\begin{IEEEeqnarray*}{rCl}
W_{U,V}(\seq{x})
&= &\sum_{\seq{u} \in U, \seq{v} \in V} \seq{x}^{\sow(\seq{u},\seq{v})} \\
&= &\sum_{\substack{\seq{u}_1 \in U_1, \seq{v}_1 \in V_1 \\ \seq{u}_2 \in U_2, \seq{v}_2 \in V_2}} \seq{x}^{\sow(\seq{u}_1,\seq{v}_1)} \seq{x}^{\sow(\seq{u}_2,\seq{v}_2)} \\
&= &\sum_{\seq{u}_1 \in U_1, \seq{v}_1 \in V_1} \seq{x}^{\sow(\seq{u}_1,\seq{v}_1)} \sum_{\seq{u}_2 \in U_2, \seq{v}_2 \in V_2} \seq{x}^{\sow(\seq{u}_2,\seq{v}_2)} \\
&= &W_{U_1,V_1}(\seq{x}) W_{U_2,V_2}(\seq{x}).
\end{IEEEeqnarray*}
\end{IEEEproof}

\begin{lemma}\label{le:CompleteSet}
\begin{equation}\label{eq:CompleteSet1}
A_{\seq{i}}(\field{q}^n, \field{q}^n) = {n \choose \seq{i}} (q-1)^{n-i_{\sope_{00}}}
\end{equation}
\begin{equation}\label{eq:CompleteSet2}
W_{\field{q}^n, \field{q}^n}(\seq{x}) = \left[x_{\sope_{00}} + (q-1) \sum_{S\in\sop_{00}^c} x_{S} \right]^n
\end{equation}
where
\begin{equation}
{n \choose \seq{i}} \eqdef \frac{n!}{\prod_{S \in \sop} i_S!}.
\end{equation}
\end{lemma}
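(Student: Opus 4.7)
The plan is to prove the enumerator identity \eqref{eq:CompleteSet2} first, then extract \eqref{eq:CompleteSet1} from it by the multinomial theorem. The whole argument reduces to the $n=1$ case plus the Cartesian product lemma.

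First I would handle $n=1$. Here the group $M_1 = \nzfield{q}$ acts on $\field{q}^2$ by scalar multiplication, so every orbit $S \in \sop$ other than $\sope_{00}$ is a punctured one-dimensional subspace and therefore has size exactly $q-1$, while $\sope_{00}$ has size $1$. For any pair $(u,v) \in \field{q} \times \field{q}$, the second-order weight $\sow(u,v)$ is the unit vector $\seq{e}_S$ where $S$ is the orbit containing $(u,v)$. Summing over the $q^2$ pairs therefore gives
\[
W_{\field{q},\field{q}}(\seq{x}) = x_{\sope_{00}} + (q-1)\sum_{S \in \sop_{00}^c} x_S.
\]

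Next I would lift this to general $n$. Writing $\field{q}^n = \field{q} \times \cdots \times \field{q}$ ($n$ factors) and applying Lemma~\ref{le:CartesianProduct} inductively yields
\[
W_{\field{q}^n, \field{q}^n}(\seq{x}) = \bigl[W_{\field{q},\field{q}}(\seq{x})\bigr]^n = \left[x_{\sope_{00}} + (q-1)\sum_{S \in \sop_{00}^c} x_S\right]^n,
\]
which is exactly \eqref{eq:CompleteSet2}. For \eqref{eq:CompleteSet1}, I would then expand this $n$th power by the multinomial theorem: for $\seq{i} \in \mathcal{P}_n$, the coefficient of $\seq{x}^{\seq{i}} = \prod_{S} x_S^{i_S}$ is
\[
\binom{n}{\seq{i}} \cdot 1^{i_{\sope_{00}}} \cdot (q-1)^{\sum_{S \in \sop_{00}^c} i_S} = \binom{n}{\seq{i}} (q-1)^{n - i_{\sope_{00}}},
\]
and matching this against the definition $W_{\field{q}^n,\field{q}^n}(\seq{x}) = \sum_{\seq{i}} A_{\seq{i}}(\field{q}^n,\field{q}^n)\,\seq{x}^{\seq{i}}$ gives \eqref{eq:CompleteSet1}.

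There is really no substantive obstacle here; the only point requiring a moment of care is the orbit count for $n=1$, which hinges on the standing restriction that $\sop$ indexes orbits of $M_1$ on $\field{q}^2$ rather than, say, orbits under the larger group used for joint weights. Once that base case is correct, Lemma~\ref{le:CartesianProduct} and the multinomial theorem do all the remaining work.
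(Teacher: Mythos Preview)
Your proposal is correct and matches the paper's own proof essentially verbatim: the paper also computes $W_{\field{q},\field{q}}(\seq{x}) = x_{\sope_{00}} + (q-1)\sum_{S\in\sop_{00}^c} x_S$ and then invokes Lemma~\ref{le:CartesianProduct}. Your explicit multinomial expansion to recover \eqref{eq:CompleteSet1} is a detail the paper leaves to the reader, but the approach is identical.
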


\begin{IEEEproof}
It is clear that
\[
W_{\field{q}, \field{q}}(\seq{x}) = x_{\sope_{00}} + (q-1) \sum_{S\in\sop_{00}^c} x_{S}.
\]
This together with Lemma~\ref{le:CartesianProduct} yields \eqref{eq:CompleteSet1} and \eqref{eq:CompleteSet2}.
\end{IEEEproof}

\begin{lemma}\label{le:Permutation}
Let $\seq{c} \in (\nzfield{q})^n$ and $\sigma \in \symgroup{n}$. Then for any $U, V \subseteq \field{q}^n$,
\begin{equation}\label{eq:Permutation2}
A_{\seq{i}}(U, V) = A_{\seq{i}}(\xi_{\seq{c},\sigma}(U), \xi_{\seq{c},\sigma}(V)) \qquad \forall \seq{i} \in \mathcal{P}_n.
\end{equation}
where $\xi_{\seq{c},\sigma}$ is a monomial map defined by \eqref{eq:Permutation1}. Moreover, for any random $\mathcal{U}, \mathcal{V} \subseteq \field{q}^n$,
\begin{equation}\label{eq:Permutation3}
P\{\seq{u} \in \Xi_n(\mathcal{U}), \seq{v} \in \Xi_n(\mathcal{V})\}
= \frac{E[A_{\sow(\seq{u}, \seq{v})}(\mathcal{U}, \mathcal{V})]}{A_{\sow(\seq{u}, \seq{v})}(\field{q}^n, \field{q}^n)}
\end{equation}
for all $\seq{u}, \seq{v} \in \field{q}^n$, where $\Xi_n$ is a uniform random monomial map.
\end{lemma}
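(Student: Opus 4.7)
The plan is to handle the two claims separately, since part (1) is essentially orbit-invariance and part (2) is a Burnside-flavored counting argument.

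For the identity \eqref{eq:Permutation2}, I would appeal directly to the previous lemma identifying orbits of $M_n$ with second-order weight classes. Since $\xi_{\seq{c},\sigma} \in M_n$ is a bijection of $\field{q}^n$, the induced map $(\seq{u},\seq{v}) \mapsto (\xi_{\seq{c},\sigma}(\seq{u}), \xi_{\seq{c},\sigma}(\seq{v}))$ is a bijection of $\field{q}^n \times \field{q}^n$ that carries $U \times V$ onto $\xi_{\seq{c},\sigma}(U) \times \xi_{\seq{c},\sigma}(V)$; and it preserves second-order weight because second-order weight is constant on orbits of $M_n$. Intersecting with the class $\{(\seq{u}',\seq{v}'): \sow(\seq{u}',\seq{v}') = \seq{i}\}$ then transfers cardinalities, yielding \eqref{eq:Permutation2}.

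For \eqref{eq:Permutation3}, I would first condition on $\mathcal{U},\mathcal{V}$ and reduce to the deterministic claim: for fixed $U,V \subseteq \field{q}^n$ and fixed $\seq{u},\seq{v} \in \field{q}^n$,
\[
P\{\seq{u} \in \Xi_n(U),\ \seq{v} \in \Xi_n(V)\} = \frac{A_{\sow(\seq{u},\seq{v})}(U,V)}{A_{\sow(\seq{u},\seq{v})}(\field{q}^n, \field{q}^n)}.
\]
The left-hand side equals $|M_n|^{-1} |\{\xi \in M_n : \xi^{-1}(\seq{u}) \in U,\ \xi^{-1}(\seq{v}) \in V\}|$. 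Since $\xi \mapsto \xi^{-1}$ is a bijection of the group $M_n$, this is the same as counting $\xi$ with $(\xi(\seq{u}), \xi(\seq{v})) \in U \times V$. By the orbit-stabilizer theorem, the map $\xi \mapsto (\xi(\seq{u}), \xi(\seq{v}))$ hits each element of the orbit $\overline{(\seq{u},\seq{v})}$ exactly $|M_n|/|\overline{(\seq{u},\seq{v})}|$ times. Hence the count equals $(|M_n|/|\overline{(\seq{u},\seq{v})}|) \cdot |U \times V \cap \overline{(\seq{u},\seq{v})}|$. Using the previous lemma once more to identify $\overline{(\seq{u},\seq{v})}$ with the set of pairs of second-order weight $\sow(\seq{u},\seq{v})$, the intersection count is $A_{\sow(\seq{u},\seq{v})}(U,V)$ and the orbit size is $A_{\sow(\seq{u},\seq{v})}(\field{q}^n, \field{q}^n)$. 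Dividing by $|M_n|$ gives the deterministic identity.

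Finally, I would take expectation over $(\mathcal{U},\mathcal{V})$. Because $\Xi_n$ is assumed independent of $(\mathcal{U},\mathcal{V})$ (by the paper's convention on independence), the conditional probability computation above goes through with $(U,V)$ replaced by $(\mathcal{U},\mathcal{V})$ inside the expectation, and the denominator $A_{\sow(\seq{u},\seq{v})}(\field{q}^n, \field{q}^n)$ is nonrandom and can be pulled out, yielding \eqref{eq:Permutation3}. The only mild subtlety, and thus the main point to make carefully, is the orbit-stabilizer step — making sure the fiber cardinalities are uniform over the orbit so that $|U \times V \cap \overline{(\seq{u},\seq{v})}|$ appears cleanly; everything else is bookkeeping.
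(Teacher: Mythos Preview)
Your proposal is correct. Part~(1) is handled identically to the paper (which simply says \eqref{eq:Permutation2} ``clearly holds''). For part~(2) your route differs slightly in packaging: you condition on $(\mathcal{U},\mathcal{V})$ and invoke orbit--stabilizer to count the $\xi\in M_n$ with $(\xi(\seq{u}),\xi(\seq{v}))\in U\times V$, whereas the paper avoids conditioning and instead observes directly that $P\{\seq{u}'\in\Xi_n(\mathcal{U}),\ \seq{v}'\in\Xi_n(\mathcal{V})\}$ is constant over $(\seq{u}',\seq{v}')$ in a fixed orbit (because $\xi^{-1}\Xi_n$ has the same law as $\Xi_n$), sums this constant over the orbit, swaps sum and expectation to obtain $E[A_{\sow(\seq{u},\seq{v})}(\Xi_n(\mathcal{U}),\Xi_n(\mathcal{V}))]$, and then applies \eqref{eq:Permutation2} to drop $\Xi_n$. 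The two arguments are equivalent: your orbit--stabilizer step makes the uniform-fiber count explicit, while the paper's symmetrization trick hides it inside the invariance of the law of $\Xi_n$ under one-sided translation; each yields the same ratio with no extra hypotheses.
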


\begin{IEEEproof}
Identity~\eqref{eq:Permutation2} clearly holds. As for \eqref{eq:Permutation3}, we note that
\[
P\{\seq{u}' \in \Xi_n(\mathcal{U}), \seq{v}' \in \Xi_n(\mathcal{V})\}
= P\{\seq{u} \in \Xi_n(\mathcal{U}), \seq{v} \in \Xi_n(\mathcal{V})\}
\]
whenever $\sow(\seq{u}', \seq{v}') = \sow(\seq{u}, \seq{v})$. Then we have
\begin{IEEEeqnarray*}{l}
A_{\sow(\seq{u}, \seq{v})}(\field{q}^n, \field{q}^n) P\{\seq{u} \in \Xi_n(\mathcal{U}), \seq{v} \in \Xi_n(\mathcal{V})\} \\
\findent = \sum_{\seq{u}', \seq{v}': \sow(\seq{u}', \seq{v}') = \sow(\seq{u}, \seq{v})} P\{\seq{u}' \in \Xi_n(\mathcal{U}), \seq{v}' \in \Xi_n(\mathcal{V})\} \\
\findent = E\left[\sum_{\seq{u}', \seq{v}': \sow(\seq{u}', \seq{v}') = \sow(\seq{u}, \seq{v})} 1\{\seq{u}' \in \Xi_n(\mathcal{U}), \seq{v}' \in \Xi_n(\mathcal{V})\}\right] \\
\findent = E[A_{\sow(\seq{u}, \seq{v})}(\Xi_n(\mathcal{U}), \Xi_n(\mathcal{V}))].
\end{IEEEeqnarray*}
This combined with \eqref{eq:Permutation2} gives \eqref{eq:Permutation3}.
\end{IEEEproof}

\begin{remark}\label{re:SOIOWD}
Lemma~\ref{le:Permutation} can be further generalized to the case of a mapping randomized by $\mathcal{R}$ defined by \eqref{eq:Operator}. To this end, we need the concept of second-order input-output weight distribution, an analogue of input-output weight distribution. This generalization can facilitate the computation of the second-order weight distribution of serially concatenated codes with all component codes randomized by $\mathcal{R}$. Since Lemma~\ref{le:Permutation} is enough for this paper, we leave this generalization to the reader.
\end{remark}

\begin{lemma}\label{le:SecondMoment}
For $U, V \subseteq \field{q}^n$, the product $W_U(x)W_V(y)$ can be obtained from the second-order weight enumerator $W_{U,V}(\seq{x})$ by the substitution
\begin{equation}\label{eq:SecondMoment1}
x_S \mapsto x^{\weight(\pi_1(\rho(S)))}y^{\weight(\pi_2(\rho(S)))} \qquad \forall S \in \sop
\end{equation}
where $\pi_i$ ($i = 1, 2$) is the canonical projection $\field{q}^2 \to \field{q}$ given by $(v_1, v_2) \mapsto v_i$. As a consequence, we have
\begin{equation}\label{eq:SecondMoment2}
A_j(U) A_k(V) = \sum_{l=0}^{\min\{j,k\}} \sum_{\substack{\sum_{S \in \sop_{11}} i_S = l \\ i_{\sope_{10}}=j-l, i_{\sope_{01}}=k-l}} A_{\seq{i}}(U,V).
\end{equation}
\end{lemma}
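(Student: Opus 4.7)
The plan is to apply the substitution \eqref{eq:SecondMoment1} directly to the definition $W_{U,V}(\seq{x}) = \sum_{\seq{u}\in U,\seq{v}\in V} \seq{x}^{\sow(\seq{u},\seq{v})}$ and verify that each monomial $\seq{x}^{\sow(\seq{u},\seq{v})}$ collapses to $x^{\weight(\seq{u})} y^{\weight(\seq{v})}$; summing over $U\times V$ then immediately yields $W_U(x)W_V(y)$.

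The key observation is an invariance-of-zero-pattern within each orbit: for any $S \in \sop$ and any $(a,b) \in S$, we have $\weight(\pi_i(a,b)) = \weight(\pi_i(\rho(S)))$ for $i = 1,2$, because the action of $M_1 = \nzfield{q}$ on $\field{q}^2$ is componentwise scaling by a nonzero scalar and therefore preserves whether each component is zero. With $\sow(\seq{u},\seq{v}) = (i_S)_{S\in\sop}$ where $i_S = \sum_{j=1}^n 1_S(u_j,v_j)$, the substitution \eqref{eq:SecondMoment1} turns $\seq{x}^{\sow(\seq{u},\seq{v})}$ into
\[
\prod_{S\in\sop} x^{i_S\,\weight(\pi_1(\rho(S)))}\, y^{i_S\,\weight(\pi_2(\rho(S)))}.
\]
By the observation, the $x$-exponent simplifies to $\sum_{S\in\sop}\sum_{j:(u_j,v_j)\in S}\weight(\pi_1(u_j,v_j)) = \sum_{j=1}^n\weight(u_j) = \weight(\seq{u})$, and similarly the $y$-exponent equals $\weight(\seq{v})$, proving the first claim.

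For \eqref{eq:SecondMoment2}, I would sort the orbits of $\sop$ by the zero-pattern of their standard representatives into four classes: $\sope_{00}$ (contributing to neither $x$ nor $y$), $\sope_{10}$ (to $x$ only), $\sope_{01}$ (to $y$ only), and those in $\sop_{11}$ (to both). Extracting the coefficient of $x^j y^k$ from $\sum_{\seq{i}\in\mathcal{P}_n} A_{\seq{i}}(U,V)\,\seq{x}^{\seq{i}}$ after the substitution forces the simultaneous conditions $i_{\sope_{10}} + \sum_{S\in\sop_{11}} i_S = j$ and $i_{\sope_{01}} + \sum_{S\in\sop_{11}} i_S = k$. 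Parametrising by $l \eqdef \sum_{S\in\sop_{11}} i_S$, which must lie in $\{0,1,\ldots,\min\{j,k\}\}$ because $i_{\sope_{10}} = j-l$ and $i_{\sope_{01}} = k-l$ are nonnegative, gives exactly \eqref{eq:SecondMoment2}.

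The argument is essentially bookkeeping; the only substantive input is the orbit-invariance of the zero-pattern, and there is no step that presents a real obstacle.
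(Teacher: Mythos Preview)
Your proof is correct and is precisely the straightforward bookkeeping argument one would expect; the paper itself leaves this proof to the reader, so there is no alternative approach to compare against.
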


The proof is left to the reader. Note that $A_j(U) A_k(V) = \coef(W_U(x)W_V(y), x^jy^k)$.

One of the most famous results in coding theory is the MacWilliams identities \cite{MacWilliams196300}. Now, we shall derive an analogue of MacWilliams identities for the second-order weight distribution.

\begin{theorem}\label{th:MacWilliams}
For any $V \subseteq \field{q}^n$, we define the \emph{orthogonal set} $\orth{V}$ by
\begin{equation}\label{eq:MacWilliams1}
\orth{V} \eqdef \left\{\seq{v}' \in \field{q}^n: \seq{v} \cdot \seq{v}' \eqdef \sum_{i=1}^n v_i v_i' = 0 \mbox{ for all $\seq{v} \in V$}\right\}.
\end{equation}
Then for any subspaces $U, V \subseteq \field{q}^n$,
\begin{equation}\label{eq:MacWilliams2}
W_{\orth{U}, \orth{V}}(\seq{x}) = \frac{1}{|U||V|} W_{U,V}(\seq{x} \mat{K})
\end{equation}
where $\mat{K}$ is a $(q+2)\times (q+2)$ matrix $(\matentry{K}_{S,T})_{S,T \in \sop}$ defined by
\begin{subnumcases}{\matentry{K}_{S,T} \eqdef\label{eq:MacWilliams3}}
|S|, &$T \subseteq \orth{S}$ \\
-1, &$T \not\subseteq \orth{S}$.
\end{subnumcases}
\end{theorem}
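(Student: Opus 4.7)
The plan is to prove this via the character-theoretic machinery used for the classical MacWilliams identity, suitably adapted for the bilinear pair $(U,V)$. Fix a nontrivial additive character $\chi:\field{q}\to\nzcomplex$. The key input is the standard indicator formula $1_{\orth{U}}(\seq{u}')=\frac{1}{|U|}\sum_{\seq{u}\in U}\chi(\seq{u}\cdot\seq{u}')$ (and likewise for $V$), valid for any subspace.

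First I would apply this to both orthogonal sums simultaneously and exchange the order of summation to obtain
\[
W_{\orth{U},\orth{V}}(\seq{x}) = \frac{1}{|U||V|}\sum_{\seq{u}\in U,\seq{v}\in V}\sum_{\seq{u}',\seq{v}'\in\field{q}^n}\chi(\seq{u}\cdot\seq{u}'+\seq{v}\cdot\seq{v}')\seq{x}^{\sow(\seq{u}',\seq{v}')}.
\]
Because $\sow$ is additive over coordinates (the same feature driving Lemma~\ref{le:CartesianProduct}), the inner sum factors as $\prod_{i=1}^n g(u_i,v_i)$, where
\[
g(a,b) \eqdef \sum_{(a',b')\in\field{q}^2}\chi(aa'+bb')\seq{x}^{\sow(a',b')}.
\]

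The heart of the argument is to show that $g(a,b)=(\seq{x}\mat{K})_{S}$ whenever $(a,b)\in S$. I would split $\field{q}^2$ into $M_1$-orbits. The zero orbit $\sope_{00}$ contributes $x_{\sope_{00}}$. Each $T\ne\sope_{00}$ has the form $\nzfield{q}\cdot\rho(T)$, so its contribution is $x_T\sum_{\lambda\in\nzfield{q}}\chi\bigl(\lambda\,((a,b)\cdot\rho(T))\bigr)$, which equals $|T|\,x_T$ if $(a,b)\cdot\rho(T)=0$ and $-x_T$ otherwise. Since $\orth{T}$ is a subspace stable under the $M_1$-action, the condition $(a,b)\cdot\rho(T)=0$ is equivalent to $S\subseteq\orth{T}$, which by symmetry of the bilinear form is equivalent to $T\subseteq\orth{S}$, matching the case split in \eqref{eq:MacWilliams3}. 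Collecting terms gives $g(a,b)=\sum_T \matentry{K}_{T,S}\,x_T=(\seq{x}\mat{K})_S$, hence $\prod_{i=1}^n g(u_i,v_i)=(\seq{x}\mat{K})^{\sow(\seq{u},\seq{v})}$, and summing this over $(\seq{u},\seq{v})\in U\times V$ yields $W_{U,V}(\seq{x}\mat{K})$.

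The main obstacle I anticipate is keeping the indexing convention of $\mat{K}$ straight: the character sum naturally produces the coefficient $|T|$ under the condition $T\subseteq\orth{S}$, whereas \eqref{eq:MacWilliams3} gives the entry as $|S|$ under the same condition. These reconcile once one notes that the substitution $\seq{x}\mapsto\seq{x}\mat{K}$ replaces $x_S$ by $\sum_T x_T\,\matentry{K}_{T,S}$, so the roles of $S$ and $T$ swap; together with the symmetry $S\subseteq\orth{T}\iff T\subseteq\orth{S}$, the two descriptions define the same matrix. A minor bookkeeping point is to record $(U\times V)^{\perp}=\orth{U}\times\orth{V}$ in $\field{q}^{2n}$ to justify the character step, which is routine since $U$ and $V$ are subspaces.
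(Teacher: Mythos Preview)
Your proposal is correct and follows essentially the same approach as the paper: the paper likewise inserts the character-sum indicator formula for $1_{\orth{U}}$ and $1_{\orth{V}}$ (its Lemma~\ref{le:MacWilliams1}), swaps the sums, and then evaluates the inner sum coordinatewise to obtain $(\seq{x}\mat{K})^{\sow(\seq{u},\seq{v})}$ (its Lemma~\ref{le:MacWilliams2}). The only cosmetic difference is that the paper packages the coordinatewise computation as a separate lemma rather than carrying out the orbit decomposition of $g(a,b)$ inline as you do; your indexing worry is handled there exactly by the symmetry $S\subseteq\orth{T}\iff T\subseteq\orth{S}$ you already identified.
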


\begin{IEEEproof}
Since $U$ and $V$ are subspaces of $\field{q}^n$, it follows from Lemma~\ref{le:MacWilliams1} that
\begin{IEEEeqnarray*}{l}
W_{\orth{U}, \orth{V}}(\seq{x}) \\
\findent = \sum_{\seq{u}', \seq{v}' \in \field{q}^n} 1_{\orth{U}}(\seq{u}') 1_{\orth{V}}(\seq{v}') \seq{x}^{\sow(\seq{u}', \seq{v}')} \\
\findent= \frac{1}{|U||V|} \sum_{\seq{u} \in U, \seq{v} \in V} \sum_{\seq{u}', \seq{v}' \in \field{q}^n} \chi(\seq{u} \cdot \seq{u}' + \seq{v} \cdot \seq{v}') \seq{x}^{\sow(\seq{u}',\seq{v}')}.
\end{IEEEeqnarray*}
Applying Lemma~\ref{le:MacWilliams2} then gives
\[
W_{\orth{U}, \orth{V}}(\seq{x}) = \frac{1}{|U||V|} \sum_{\seq{u} \in U, \seq{v} \in V} (\seq{x} \mat{K})^{\sow(\seq{u}, \seq{v})}
\]
as desired.
\end{IEEEproof}

\begin{remark}
The set $\field{q}^2$, as a direct product of $\field{q}$, is a Frobenius ring because $\field{q}$ is a Frobenius ring and the class of Frobenius rings is closed under finite direct products of rings.\footnote{A ring $R$ is said to be a \emph{Frobenius ring} if there exists a group homomorphism $f: (R, +) \to \nzcomplex$ (character of $(R, +)$), whose kernel contains no nonzero left or right ideal of $R$. Such a homomorphism is called a \emph{generating character} of $R$. The reader is referred to \cite[Sec.~16]{Lam1999} for background information on Frobenius rings.} Consequently, Theorem~\ref{th:MacWilliams} can be regarded as a consequence of the generalized MacWilliams identities for linear codes over finite Frobenius rings \cite{Wood199903, Honold200100}.
\end{remark}

\section{Second-Order Weight Distributions of Regular LDPC Code Ensembles}\label{sec:Applications}

Equipped with the tool established in Section~\ref{sec:Theory}, we proceed to compute the second-order weight distributions of regular LDPC code ensembles.

At first, we compute the second-order weight distributions of two simple codes, the single symbol repetition code and the single symbol check code.

\begin{definition}
A \emph{single symbol repetition map} $\repmap_c: \field{q} \to \field{q}^c$ with the parameter $c$ is given by $v \mapsto (v, v, \cdots, v)$. The image of $\repmap_c$ is called a \emph{single symbol repetition code}, which we denote by $\repcode_c$.
\end{definition}

\begin{lemma}\label{le:Repetition}
For the single symbol repetition code $\repcode_c$,
\[
W_{\repcode_c, \repcode_c}(\seq{x}) = x_{S_{00}}^c + (q-1) \sum_{S \in \sop_{00}^c} x_S^c.
\]
\end{lemma}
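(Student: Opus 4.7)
The plan is to evaluate the second-order weight enumerator directly from the definition, exploiting the fact that codewords of $\repcode_c$ are constant vectors. Since $\repcode_c = \{(v,v,\ldots,v) : v \in \field{q}\}$, any pair $(\seq{u},\seq{v}) \in \repcode_c \times \repcode_c$ has the form $\seq{u}=(u,u,\ldots,u)$, $\seq{v}=(v,v,\ldots,v)$ for some $u,v \in \field{q}$. Hence every coordinate pair $(u_i,v_i)$ equals the same element $(u,v) \in \field{q}^2$, so it lies in exactly one orbit $\overline{(u,v)} \in \sop$. From the definition of second-order weight,
\[
\sow(\seq{u},\seq{v})_S = \begin{cases} c, & S = \overline{(u,v)} \\ 0, & \text{otherwise,} \end{cases}
\]
which immediately gives $\seq{x}^{\sow(\seq{u},\seq{v})} = x_{\overline{(u,v)}}^c$.

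Next, I would group the sum $W_{\repcode_c,\repcode_c}(\seq{x}) = \sum_{u,v \in \field{q}} x_{\overline{(u,v)}}^c$ by orbit. For this I need to recall the orbit sizes of $M_1 = \nzfield{q}$ acting on $\field{q}^2$: the orbit $\sope_{00} = \{(0,0)\}$ has size $1$, while every other orbit $S \in \sop_{00}^c$ has size $q-1$ (the stabilizer of a nonzero element is trivial). Collecting terms,
\[
\sum_{(u,v) \in \field{q}^2} x_{\overline{(u,v)}}^c = x_{\sope_{00}}^c + \sum_{S \in \sop_{00}^c} |S|\, x_S^c = x_{\sope_{00}}^c + (q-1)\sum_{S \in \sop_{00}^c} x_S^c,
\]
which is the claimed identity. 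As a sanity check, the total number of pairs is $1 + (q+1)(q-1) = q^2 = |\field{q}|^2$, matching $|\repcode_c|^2$.

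There is no real obstacle here; the argument is a direct unpacking of the definitions combined with the elementary fact about orbit sizes of $\nzfield{q}$ on $\field{q}^2$. Alternatively, one could invoke Lemma~\ref{le:CartesianProduct} in reverse by noting that $\repcode_c$ is not a Cartesian product, so the most economical route really is the direct computation above.
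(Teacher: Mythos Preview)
Your proof is correct: the direct computation using the fact that codewords of $\repcode_c$ are constant, together with the orbit sizes of $\nzfield{q}$ on $\field{q}^2$, is exactly what is needed. The paper itself leaves this proof to the reader, so there is no alternative argument to compare against; your approach is the natural one.
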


The proof is left to the reader.

\begin{definition}
A \emph{single symbol check map} $\chkmap_d: \field{q}^d \to \field{q}$ with the parameter $d$ is given by $\seq{v} \mapsto \sum_{i=1}^d v_i$. The kernel of $\chkmap_d$ is called a \emph{single symbol check code}, which we denote by $\chkcode_d$.
\end{definition}

\begin{lemma}\label{le:Check}
For the single symbol check code $\chkcode_d$,
\begin{IEEEeqnarray*}{rCl}
W_{\chkcode_d,\chkcode_d}(\seq{x})
&= &\frac{1}{q^2} \left[\left( \sum_{S \in \sop} x_{S} \matentry{K}_{S,\sope_{00}} \right)^d \right. \\
& &\breakop{+} \left.(q-1) \sum_{T \in \sop_{00}^c} \left( \sum_{S \in \sop} x_{S} \matentry{K}_{S,T} \right)^d \right]
\end{IEEEeqnarray*}
where $\matentry{K}_{S,T}$ is defined by \eqref{eq:MacWilliams3}.
\end{lemma}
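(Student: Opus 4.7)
The plan is to derive Lemma~\ref{le:Check} as a direct application of the MacWilliams-type identity (Theorem~\ref{th:MacWilliams}) to the repetition code, exploiting the duality
\[
\chkcode_d = \orth{\repcode_d}.
\]
This duality is immediate from the definitions: a vector $\seq{v} \in \field{q}^d$ lies in $\chkcode_d$ iff $\sum_i v_i = 0$ iff $\seq{v} \cdot (1,1,\ldots,1) = 0$, and $(1,1,\ldots,1)$ generates the one-dimensional subspace $\repcode_d$.

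With this identification in hand, I would set $U = V = \repcode_d$ in Theorem~\ref{th:MacWilliams}. Since $\repcode_d$ is a one-dimensional subspace of $\field{q}^d$, we have $|U||V| = q^2$, so the theorem yields
\[
W_{\chkcode_d, \chkcode_d}(\seq{x}) = \frac{1}{q^2} W_{\repcode_d, \repcode_d}(\seq{x}\mat{K}).
\]
Next I would substitute the explicit form of $W_{\repcode_d, \repcode_d}$ from Lemma~\ref{le:Repetition}, namely
\[
W_{\repcode_d, \repcode_d}(\seq{x}) = x_{\sope_{00}}^d + (q-1) \sum_{T \in \sop_{00}^c} x_T^d,
\]
and replace each indeterminate $x_T$ by the $T$-th coordinate of $\seq{x}\mat{K}$, which by the definition of matrix multiplication is $(\seq{x}\mat{K})_T = \sum_{S \in \sop} x_S \matentry{K}_{S,T}$. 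Raising to the $d$-th power and summing gives exactly the expression in the statement.

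There is really no hard step here; the only thing to be careful about is bookkeeping with the indexing convention of $\mat{K}$, i.e., confirming that the substitution $x_T \mapsto (\seq{x}\mat{K})_T$ (rather than $(\mat{K}\seq{x})_T$) is the one implicit in the statement of Theorem~\ref{th:MacWilliams}. Once the duality $\chkcode_d = \orth{\repcode_d}$ is observed and the cardinality $|\repcode_d| = q$ is used, the rest is a one-line substitution.
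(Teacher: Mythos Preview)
Your proposal is correct and matches the paper's own proof essentially line for line: the paper simply says to use Theorem~\ref{th:MacWilliams} and Lemma~\ref{le:Repetition} together with the duality $\chkcode_d = \orth{(\repcode_d)}$, which is exactly what you do. Your remark about the indexing convention $(\seq{x}\mat{K})_T = \sum_{S} x_S \matentry{K}_{S,T}$ is also consistent with how the paper uses $\seq{x}\mat{K}$ (see the proof of Lemma~\ref{le:MacWilliams2}).
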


\begin{IEEEproof}
Use Theorems~\ref{th:MacWilliams} and Lemma~\ref{le:Repetition} with $\chkcode_d = \orth{(\repcode_d)}$.
\end{IEEEproof}

\begin{example}
When $q=3$, Lemma~\ref{le:Check} gives
\begin{IEEEeqnarray*}{l}
W_{\chkcode_d,\chkcode_d}(\seq{x})\\
\findent = \frac{1}{9} \left[\left( x_{\overline{(0,0)}} + 2x_{\overline{(0,1)}} + 2x_{\overline{(1,0)}} + 2x_{\overline{(1,1)}} + 2x_{\overline{(1,2)}} \right)^d \right. \\
\findent\nullrel \breakop{+} 2 \left( x_{\overline{(0,0)}} - x_{\overline{(0,1)}} + 2x_{\overline{(1,0)}} - x_{\overline{(1,1)}} - x_{\overline{(1,2)}} \right)^d \\
\findent\nullrel \breakop{+} 2 \left( x_{\overline{(0,0)}} +2 x_{\overline{(0,1)}} - x_{\overline{(1,0)}} - x_{\overline{(1,1)}} - x_{\overline{(1,2)}} \right)^d \\
\findent\nullrel \breakop{+} 2 \left( x_{\overline{(0,0)}} - x_{\overline{(0,1)}} - x_{\overline{(1,0)}} - x_{\overline{(1,1)}} + 2x_{\overline{(1,2)}} \right)^d \\
\findent\nullrel \breakop{+} \left.2 \left( x_{\overline{(0,0)}} - x_{\overline{(0,1)}} - x_{\overline{(1,0)}} + 2x_{\overline{(1,1)}} - x_{\overline{(1,2)}} \right)^d \right].
\end{IEEEeqnarray*}
\end{example}

We are now ready to compute the second-order weight distributions of regular LDPC code ensembles. There are a few kinds of regular LDPC code ensembles \cite{Litsyn200204}. We shall consider here two typical regular LDPC code ensembles. For convenience, we denote by $\chkmap_{d,n}$ (resp. $\repmap_{c,n}$) the $n$-fold Cartesian product of $\chkmap_d$ (resp. $\repmap_c$).

The first ensemble of regular LDPC codes is due to Gallager \cite{Gallager196300}. Though it is only known as a binary regular LDPC code ensemble, its extension to a finite field is immediate.

\begin{definition}\label{df:LDPCA}
Let $c$, $d$, and $n$ be positive integers such that $d$ divides $n$. Let $\ldmapA_{d,n}: \field{q}^n \to \field{q}^{n/d}$ be a random linear transformation defined by
\[
\ldmapA_{d,n}(\seq{v}) \eqdef \chkmap_{d,n/d}(\Xi_{n}(\seq{v})).
\]
The \emph{regular LDPC code ensemble I}, which we denote by $\ldcodeA_{c,d,n}$, is defined as the intersection of $c$ independent copies of the kernel of $\ldmapA_{d,n}$.
\end{definition}

According to this definition, $\ldcodeA_{c,d,n}$ is the solution space of the random equations
\[
H^{(i)}(\seq{v}) = 0, \qquad i=1, 2, \ldots, c
\]
where $H^{(i)}$ is the $i$th independent copy of $\ldmapA_{d,n/d}$. In other words, the parity-check matrix of $\ldcodeA_{c,d,n}$ consists of $c$ submatrices, each being an independent copy of the transformation matrix of $\ldmapA_{d,n/d}$ (with input vectors in column-vector form). Since the transformation matrix of $\ldmapA_{d,n/d}$ contains exactly one nonzero entry in each column and $d$ nonzero entries in each row, the resulting parity-check matrix is a $(nc/d)$-by-$n$ random sparse matrix with $c$ nonzero entries in each column and $d$ nonzero entries in each row, which motivates the term ``regular low-density parity-check code ensemble''.

\begin{theorem}\label{th:LDPCA}
For the regular LDPC code ensemble $\ldcodeA_{c,d,n}$,
\[
E[A_{\seq{i}}(\ldcodeA_{c,d,n}, \ldcodeA_{c,d,n})]
= \frac{\left[\coef([W_{\chkcode_d,\chkcode_d}(\seq{x})]^{n/d}, \seq{x}^{\seq{i}})\right]^c}{\left[{n \choose \seq{i}} (q-1)^{n-i_{\sope_{00}}}\right]^{c-1}}
\]
where $\seq{i} \in \mathcal{P}_n$.
\end{theorem}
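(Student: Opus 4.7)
The plan is to start from
\[
E[A_{\seq{i}}(\ldcodeA_{c,d,n}, \ldcodeA_{c,d,n})] = \sum_{(\seq{u}, \seq{v}): \sow(\seq{u}, \seq{v}) = \seq{i}} P\{\seq{u}, \seq{v} \in \ldcodeA_{c,d,n}\},
\]
and then reduce the pair-membership probability inside the sum, which by symmetry will depend only on $\seq{i}$. Since $\ldcodeA_{c,d,n}$ is, by Definition~\ref{df:LDPCA}, the intersection of $c$ independent copies of $\ker(\ldmapA_{d,n})$, I would first factor
\[
P\{\seq{u}, \seq{v} \in \ldcodeA_{c,d,n}\} = \bigl[P\{\seq{u}, \seq{v} \in \ker(\ldmapA_{d,n})\}\bigr]^c.
\]

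Next I would unpack the single-copy probability. Writing $\ldmapA_{d,n} = \chkmap_{d,n/d} \circ \Xi_n$, the event $\{\seq{u} \in \ker(\ldmapA_{d,n})\}$ is equivalent to $\{\Xi_n(\seq{u}) \in \ker(\chkmap_{d,n/d})\}$, and by definition $\ker(\chkmap_{d,n/d}) = (\chkcode_d)^{n/d}$. Because $M_n$ is a group acted on uniformly by $\Xi_n$, the inverse image under $\Xi_n$ of a fixed set has the same distribution as its forward image; hence
\[
P\{\seq{u}, \seq{v} \in \ker(\ldmapA_{d,n})\} = P\{\seq{u}, \seq{v} \in \Xi_n((\chkcode_d)^{n/d})\}.
\]
Now I can apply Lemma~\ref{le:Permutation}, formula~\eqref{eq:Permutation3}, with the deterministic choice $\mathcal{U} = \mathcal{V} = (\chkcode_d)^{n/d}$, obtaining
\[
P\{\seq{u}, \seq{v} \in \ker(\ldmapA_{d,n})\} = \frac{A_{\seq{i}}((\chkcode_d)^{n/d}, (\chkcode_d)^{n/d})}{A_{\seq{i}}(\field{q}^n, \field{q}^n)}.
\]

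Then I would evaluate the two factors separately. By $n/d$-fold application of Lemma~\ref{le:CartesianProduct},
\[
W_{(\chkcode_d)^{n/d}, (\chkcode_d)^{n/d}}(\seq{x}) = [W_{\chkcode_d, \chkcode_d}(\seq{x})]^{n/d},
\]
so the numerator is $\coef([W_{\chkcode_d,\chkcode_d}(\seq{x})]^{n/d}, \seq{x}^{\seq{i}})$; Lemma~\ref{le:CompleteSet} gives the denominator as ${n \choose \seq{i}}(q-1)^{n - i_{\sope_{00}}}$.

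To finish, I would substitute back into the outer sum. The inner summand depends on $(\seq{u}, \seq{v})$ only through $\seq{i} = \sow(\seq{u}, \seq{v})$, and the number of pairs with second-order weight $\seq{i}$ is exactly $A_{\seq{i}}(\field{q}^n, \field{q}^n) = {n \choose \seq{i}}(q-1)^{n - i_{\sope_{00}}}$; one factor of this quantity cancels against the corresponding factor from the single-copy denominator raised to the $c$th power, leaving the claimed ratio with exponent $c-1$ in the denominator. The only mildly delicate step is the reduction of $\ker(\chkmap_{d,n/d} \circ \Xi_n)$ to $\Xi_n((\chkcode_d)^{n/d})$ in distribution, so that Lemma~\ref{le:Permutation} applies cleanly; everything else is routine bookkeeping.
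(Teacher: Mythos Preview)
Your proposal is correct and follows essentially the same route as the paper's proof: express the expectation as a sum over pairs of fixed second-order weight, factor the membership probability into $c$ independent copies, rewrite $\ker(\ldmapA_{d,n})$ as $\Xi_n((\chkcode_d)^{n/d})$, apply Lemma~\ref{le:Permutation}, and evaluate numerator and denominator via Lemmas~\ref{le:CartesianProduct} and~\ref{le:CompleteSet}. You are slightly more explicit than the paper about why $\ker(\chkmap_{d,n/d}\circ\Xi_n)$ and $\Xi_n((\chkcode_d)^{n/d})$ have the same distribution (the paper just cites Definition~\ref{df:LDPCA} for that step), but otherwise the arguments coincide.
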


\begin{IEEEproof}
For any $\seq{u}, \seq{v} \in \field{q}^n$ with $\sow(\seq{u}, \seq{v}) = \seq{i}$,
\begin{IEEEeqnarray*}{rCl}
P\{\seq{u}, \seq{v} \in \ker\ldmapA_{d,n}\}
&\eqvar{(a)} &P\{\seq{u}, \seq{v} \in \Xi_{n}((\chkcode_d)^{n/d})\} \\
&\eqvar{(b)} &\frac{A_{\seq{i}}((\chkcode_d)^{n/d}, (\chkcode_d)^{n/d})}{A_{\seq{i}}(\field{q}^{n}, \field{q}^{n})} \\
&\eqvar{(c)} &\frac{\coef([W_{\chkcode_d,\chkcode_d}(\seq{x})]^{n/d}, \seq{x}^{\seq{i}})}{{n \choose \seq{i}} (q-1)^{n-i_{\sope_{00}}}}
\end{IEEEeqnarray*}
where (a) follows from Definition~\ref{df:LDPCA}, (b) from Lemma~\ref{le:Permutation}, (c) follows from Lemmas~\ref{le:CartesianProduct} and \ref{le:CompleteSet}. This together with the identity
\begin{IEEEeqnarray*}{rCl}
E[A_{\seq{i}}(\ldcodeA_{c,d,n}, \ldcodeA_{c,d,n})]
&= &\sum_{\seq{u},\seq{v}: \sow(\seq{u},\seq{v}) = \seq{i}} P\{\seq{u}, \seq{v} \in \ldcodeA_{c,d,n}\} \\
&= &\sum_{\seq{u},\seq{v}: \sow(\seq{u},\seq{v}) = \seq{i}} \left(P\{\seq{u}, \seq{v} \in \ker \ldmapA_{d,n}\}\right)^c
\end{IEEEeqnarray*}
establishes the theorem.
\end{IEEEproof}

The second ensemble of regular LDPC codes is the \emph{regular bipartite graph ensemble} suggested by \cite{Luby200102, Richardson200102, Bennatan200403}.

\begin{definition}\label{df:LDPCB}
Let $c$, $d$, and $n$ be positive integers such that $d$ divides $cn$. Let $\ldmapB_{c,d,n}: \field{q}^n \to \field{q}^{cn/d}$ be a random linear transformation defined by
\[
\ldmapB_{c,d,n}(\seq{v}) \eqdef \chkmap_{d,cn/d}(\Xi_{cn}(\repmap_{c,n}(\seq{v}))).
\]
The \emph{regular LDPC code ensemble II}, which we denote by $\ldcodeB_{c,d,n}$, is defined as the kernel of $\ldmapB_{c,d,n}$.%
\footnote{If regarding $\repmap_{c,n}$ as $n$ variable nodes (each with $c$ sockets) and $\chkmap_{d,cn/d}$ as $cn/d$ check nodes (each with $d$ sockets), we immediately obtain the well-known bipartite graph model, where the connections between variable nodes and check nodes are given by the uniform random monomial map $\Xi_{cn}$.}
\end{definition}

\begin{theorem}\label{th:LDPCB}
For the regular LDPC code ensemble $\ldcodeB_{c,d,n}$,
\[
E[A_{\seq{i}}(\ldcodeB_{c,d,n}, \ldcodeB_{c,d,n})]
= \frac{{n \choose \seq{i}} \coef([W_{\chkcode_d,\chkcode_d}(\seq{x})]^{cn/d}, \seq{x}^{c\seq{i}})}{{cn \choose c\seq{i}} (q-1)^{(c-1)(n-i_{\sope_{00}})}}
\]
where $\seq{i} \in \mathcal{P}_n$ and $c\seq{i} \eqdef (ci_S)_{S\in\sop}$.
\end{theorem}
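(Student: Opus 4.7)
The plan is to mirror the template of the proof of Theorem~\ref{th:LDPCA}, with the key structural differences that only a single random monomial map $\Xi_{cn}$ appears (so the probability is not raised to the $c$-th power) and that pre-composition with $\repmap_{c,n}$ replicates each input coordinate $c$ times (so the second-order weight index is scaled from $\seq{i}$ to $c\seq{i}$). First I would unwind the defining event: by Definition~\ref{df:LDPCB}, $\seq{u}\in\ldcodeB_{c,d,n}$ iff $\Xi_{cn}(\repmap_{c,n}(\seq{u}))\in\ker\chkmap_{d,cn/d}=(\chkcode_d)^{cn/d}$, i.e.\ $\repmap_{c,n}(\seq{u})\in\Xi_{cn}^{-1}((\chkcode_d)^{cn/d})$. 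Since the monomial maps form a group on which $\Xi_{cn}$ is uniformly distributed, $\Xi_{cn}^{-1}$ is also a uniform random monomial map, and so for any $\seq{u},\seq{v}\in\field{q}^n$,
\[
P\{\seq{u},\seq{v}\in\ldcodeB_{c,d,n}\}
= P\{\repmap_{c,n}(\seq{u}),\repmap_{c,n}(\seq{v})\in\Xi_{cn}((\chkcode_d)^{cn/d})\}.
\]

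Second, I would apply \eqref{eq:Permutation3} of Lemma~\ref{le:Permutation} with the deterministic sets $\mathcal{U}=\mathcal{V}=(\chkcode_d)^{cn/d}$ and arguments $\repmap_{c,n}(\seq{u}),\repmap_{c,n}(\seq{v})$. The essential observation is that if $\sow(\seq{u},\seq{v})=\seq{i}$, then $\sow(\repmap_{c,n}(\seq{u}),\repmap_{c,n}(\seq{v}))=c\seq{i}$, because each coordinate pair $(u_k,v_k)$ lying in orbit $S$ is replicated into $c$ identical pairs, all in the same orbit $S$. This gives
\[
P\{\seq{u},\seq{v}\in\ldcodeB_{c,d,n}\}
= \frac{A_{c\seq{i}}((\chkcode_d)^{cn/d},(\chkcode_d)^{cn/d})}{A_{c\seq{i}}(\field{q}^{cn},\field{q}^{cn})}.
\]
Lemma~\ref{le:CartesianProduct} identifies $W_{(\chkcode_d)^{cn/d},(\chkcode_d)^{cn/d}}(\seq{x})=[W_{\chkcode_d,\chkcode_d}(\seq{x})]^{cn/d}$, so the numerator is $\coef([W_{\chkcode_d,\chkcode_d}(\seq{x})]^{cn/d},\seq{x}^{c\seq{i}})$, while Lemma~\ref{le:CompleteSet} evaluates the denominator as ${cn\choose c\seq{i}}(q-1)^{c(n-i_{\sope_{00}})}$. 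Summing $P\{\seq{u},\seq{v}\in\ldcodeB_{c,d,n}\}$ over all $(\seq{u},\seq{v})$ with $\sow(\seq{u},\seq{v})=\seq{i}$---of which there are ${n\choose\seq{i}}(q-1)^{n-i_{\sope_{00}}}$ by Lemma~\ref{le:CompleteSet} again---and collecting the powers of $q-1$ into $(c-1)(n-i_{\sope_{00}})$ yields the stated formula.

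The only point requiring care is the index scaling $\seq{i}\mapsto c\seq{i}$ induced by the repetition map and the corresponding replacement of $A_{\seq{i}}(\field{q}^n,\field{q}^n)$ in the denominator by $A_{c\seq{i}}(\field{q}^{cn},\field{q}^{cn})$; once these bookkeeping details are tracked, each step reduces to a direct invocation of one of the earlier lemmas, and I do not anticipate any genuine obstacle.
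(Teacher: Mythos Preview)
Your proposal is correct and follows essentially the same approach as the paper's own proof: unwind Definition~\ref{df:LDPCB} to rewrite the event in terms of $\Xi_{cn}((\chkcode_d)^{cn/d})$, apply Lemma~\ref{le:Permutation} at the scaled index $c\seq{i}$, evaluate numerator and denominator via Lemmas~\ref{le:CartesianProduct} and~\ref{le:CompleteSet}, and sum over the ${n\choose\seq{i}}(q-1)^{n-i_{\sope_{00}}}$ pairs of the given second-order weight. Your added justifications (that $\Xi_{cn}^{-1}$ is again uniform on $M_{cn}$ and that repetition sends $\sow(\seq{u},\seq{v})=\seq{i}$ to $c\seq{i}$) make explicit exactly the steps the paper leaves implicit.
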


\begin{IEEEproof}
For any $\seq{u}, \seq{v} \in \field{q}^n$ with $\sow(\seq{u}, \seq{v}) = \seq{i}$,
\begin{IEEEeqnarray*}{l}
P\{\seq{u}, \seq{v} \in \ldcodeB_{c,d,n}\} \\
\findent \eqvar{(a)} P\{\repmap_{c,n}(\seq{u}), \repmap_{c,n}(\seq{v}) \in \Xi_{cn}((\chkcode_d)^{cn/d})\} \\
\findent \eqvar{(b)} \frac{A_{c\seq{i}}((\chkcode_d)^{cn/d}, (\chkcode_d)^{cn/d})}{A_{c\seq{i}}(\field{q}^{cn}, \field{q}^{cn})} \\
\findent \eqvar{(c)} \frac{\coef([W_{\chkcode_d,\chkcode_d}(\seq{x})]^{cn/d}, \seq{x}^{c\seq{i}})}{{cn \choose c\seq{i}} (q-1)^{c(n-i_{\sope_{00}})}}
\end{IEEEeqnarray*}
where (a) follows from Definition~\ref{df:LDPCB}, (b) from Lemma~\ref{le:Permutation}, (c) follows from Lemmas~\ref{le:CartesianProduct} and \ref{le:CompleteSet}. This together with the identity
\[
E[A_{\seq{i}}(\ldcodeB_{c,d,n}, \ldcodeB_{c,d,n})]
= \sum_{\seq{u},\seq{v}: \sow(\seq{u},\seq{v}) = \seq{i}} P\{\seq{u}, \seq{v} \in \ldcodeB_{c,d,n}\}
\]
establishes the theorem.
\end{IEEEproof}

\begin{remark}
When $q=2$, Lemma~\ref{le:Check} and Theorems~\ref{th:LDPCA} and \ref{th:LDPCB} give
\[
E[A_{\seq{i}}(\ldcodeA_{c,d,n}, \ldcodeA_{c,d,n})]
= \frac{\left[\coef((g_d(\seq{x}))^{n/d}, \seq{x}^{\seq{i}})\right]^c}{{n \choose \seq{i}}^{c-1}}
\]
and
\[
E[A_{\seq{i}}(\ldcodeB_{c,d,n}, \ldcodeB_{c,d,n})]
= \frac{{n \choose \seq{i}} \coef((g_d(\seq{x}))^{cn/d}, \seq{x}^{c\seq{i}})}{{cn \choose c\seq{i}}}
\]
where
\begin{IEEEeqnarray*}{rCl}
g_d(\seq{x})
&\eqdef &\frac{1}{4} \left[\left(x_{\overline{(0,0)}} + x_{\overline{(0,1)}} + x_{\overline{(1,0)}} + x_{\overline{(1,1)}}\right)^d\right. \\
& &\breakop{+} \left(x_{\overline{(0,0)}} - x_{\overline{(0,1)}} + x_{\overline{(1,0)}} - x_{\overline{(1,1)}}\right)^d \\
& &\breakop{+} \left(x_{\overline{(0,0)}} + x_{\overline{(0,1)}} - x_{\overline{(1,0)}} - x_{\overline{(1,1)}}\right)^d \\
& &\breakop{+} \left.\left(x_{\overline{(0,0)}} - x_{\overline{(0,1)}} - x_{\overline{(1,0)}} + x_{\overline{(1,1)}}\right)^d\right].
\end{IEEEeqnarray*}
Furthermore, Lemma~\ref{le:SecondMoment} shows that
\begin{IEEEeqnarray*}{l}
E[A_j(\ldcodeA_{c,d,n}) A_k(\ldcodeA_{c,d,n})] \\
\findent = \sum_{l=0}^{\min\{j,k\}} \Bigg[ {n \choose l\;j-l\;k-l\;n-j-k+l}^{-(c-1)} \\
\findent \nullrel \breakop{\times} \left(\coef((g_d(\seq{x}))^{n/d}, x_{\overline{(0,0)}}^{n-j-k+l} x_{\overline{(0,1)}}^{k-l} x_{\overline{(1,0)}}^{j-l} x_{\overline{(1,1)}}^{l})\right)^c \Bigg]
\end{IEEEeqnarray*}
and
\begin{IEEEeqnarray*}{l}
E[A_j(\ldcodeB_{c,d,n}) A_k(\ldcodeB_{c,d,n})] \\
\findent = \sum_{l=0}^{\min\{j,k\}} \Bigg( \frac{{n \choose l\;j-l\;k-l\;n-j-k+l}}{{cn \choose cl\;c(j-l)\;c(k-l)\;c(n-j-k+l)}} \\
\findent \nullrel \breakop{\times} \coef((g_d(\seq{x}))^{cn/d}, x_{\overline{(0,0)}}^{c(n-j-k+l)} x_{\overline{(0,1)}}^{c(k-l)} x_{\overline{(1,0)}}^{c(j-l)} x_{\overline{(1,1)}}^{cl}) \Bigg).
\end{IEEEeqnarray*}
The second formula with $j=k$ coincides with the second-moment formula of binary regular LDPC codes given by \cite{Rathi200509, Barak200509, Rathi200609}.
\end{remark}

\section{Applications in Random Coding Approach}\label{sec:RandomCoding}

As discussed in Section~\ref{sec:Theory}, the uniformly distributed random $m\times n$ matrix $\rmat{G}$ plays an important role in information theory because of the property that $\seq{u}\rmat{G}$ is uniformly distributed over $\field{q}^n$ for every nonzero $\seq{u}\in \field{q}^m$. But it is not the end of the story. Theorem~\ref{th:RLC2} shows that $\mat{U}\rmat{G}$ is uniformly distributed over $\field{q}^{m\times n}$ for every invertible matrix $\mat{U} \in \field{q}^{m\times m}$. In particular, if $m \ge 2$, the product $\mat{U}\rmat{G}$ is uniformly distributed over $\field{q}^{2\times n}$ for every matrix $\mat{U} \in \field{q}^{2\times m}$ of rank $2$. In other words, for any two linearly independent vectors $\seq{u}, \seq{v} \in \field{q}^m$, the random vectors $\seq{u}\rmat{A}, \seq{v}\rmat{A}$ are uniformly distributed and independent (see Corollary~\ref{co:RLC}). In this section, we shall show that this property is reflected in the second-order weight distribution, which thus plays an important role in some well-known problems in coding theory and combinatorics.

At first, for positive integers $m,n,k$ with $k \le \min\{m,n\}$, a random $m\times n$ matrix $\rmat{A}$ is said to be $k$-good if $\mat{U}\rmat{A}$ is uniformly distributed over $\field{q}^{k\times n}$ for every matrix $\mat{U} \in \field{q}^{k\times m}$ of rank $k$.\footnote{In the sequel, when speaking of a $k$-good random $m\times n$ matrix, we shall tacitly assume that $k \le \min\{m,n\}$.} In this paper, we are only concerned with the cases $k=1, 2$. It is clear that the uniformly distributed random $m\times n$ matrix $\rmat{G}$ with $m,n \ge 2$ is $2$-good and $2$-goodness implies $1$-goodness. However, a $1$-good random matrix is not necessarily $2$-good and there are also other $2$-good random matrices than $\rmat{G}$. The next example illustrates these two facts.

\begin{example}
Consider the matrix space $\field{2}^{3\times 3}$, which is identified with $\field{8}^3$ by viewing the columns of $\mat{A} \in \field{2}^{3\times 3}$ as coordinate vectors relative to $1, \alpha, \alpha^2$, where $\alpha^3 + \alpha + 1 = 0$. The random matrix uniformly distributed over the set
\[
\matset{A}_1 \eqdef \{x (1, \alpha, \alpha^2): x \in \field{8}\}
\]
is $1$-good but not $2$-good. The random matrix uniformly distributed over the set
\[
\matset{A}_2 \eqdef \{x (1, \alpha, \alpha^2) + y (1, \alpha^2, \alpha^4): x, y \in \field{8}\}
\]
is $2$-good and it is clear that $\matset{A}_2$ is a proper subset of $\field{2}^{3\times 3}$. In fact, both $\matset{A}_1$ and $\matset{A}_2$ are maximum-rank-distance (MRD) codes \cite{Delsarte197800, Gabidulin198501, Roth199103}. The reader is referred to \cite{Yang201011} for a detailed investigation of the relation between $k$-good random matrices and MRD codes.
\end{example}

Next, let us compute the second-order weight distribution of linear code ensembles generated by a $2$-good random generator or parity-check matrix.

%\begin{definition}
%Let $\rmat{G}$ be a random $m\times n$ matrix uniformly distributed over the set $\field{q}^{m\times n}$ of all $m\times n$ matrices over $\field{q}$. The ensemble $\rlccode_{m,n}$ generated by the generator matrix $\rmat{G}$ is called the \emph{uniform random linear code ensemble}.
%\end{definition}

\begin{theorem}\label{th:RLC}
Let $\rmat{A}$ be a $2$-good random $m\times n$ matrices with $m < n$. Let $\rlccode_{m,n}$ and $\rlccodeB_{m,n}$ be two random linear code ensembles generated by the generator matrix $\rmat{A}$ and the parity-check matrix $\rmat{A}$, respectively. Then we have
\begin{IEEEeqnarray*}{l}
E\left[\frac{W_{\rlccode_{m,n},\rlccode_{m,n}}(\seq{x})}{|\rlccode_{m,n}|^2}\right] \\
\findent = \frac{1}{q^{2m}} x_{\sope_{00}}^n + \frac{q^m-1}{q^{2m}} \sum_{S \in \sop_{00}^c} \left(\frac{1}{q} x_{\sope_{00}} + \frac{q-1}{q} x_{S}\right)^n \\
\findent \nullrel \breakop{+} \frac{(q^m-1)(q^m-q)}{q^{2m}} \left(\frac{1}{q^2} x_{\sope_{00}} + \frac{q-1}{q^2} \sum_{S \in \sop_{00}^c} x_S\right)^n
\end{IEEEeqnarray*}
and
\begin{IEEEeqnarray*}{rCl}
E[W_{\rlccodeB_{m,n},\rlccodeB_{m,n}}(\seq{x})]
&= &\frac{(q^m-1)(q^m-q)}{q^{2m}} x_{\sope_{00}}^n \\
& &\breakop{+} \frac{q^m-1}{q^{2m}} \sum_{S \in \sop_{00}^c} \left(x_{\sope_{00}} + (q-1)x_{S}\right)^n \\
& &\breakop{+} \frac{1}{q^{2m}} \left(x_{\sope_{00}} + (q-1) \sum_{S \in \sop_{00}^c} x_S\right)^n.
\end{IEEEeqnarray*}
\end{theorem}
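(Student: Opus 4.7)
The plan is to prove the two formulas separately: the first by a direct computation over pairs $(\seq{u},\seq{v}) \in \field{q}^m \times \field{q}^m$ that exploits the $k$-good hypothesis, and the second by applying Theorem~\ref{th:MacWilliams} to the first.

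For the generator matrix case, I would first observe that for every realization of $\rmat{A}$, if $r = \mathrm{rank}(\rmat{A})$, then $\seq{u} \mapsto \seq{u}\rmat{A}$ is $q^{m-r}$-to-one onto $\rlccode_{m,n}$ while $|\rlccode_{m,n}|^2 = q^{2r}$, so
\[
\frac{W_{\rlccode_{m,n},\rlccode_{m,n}}(\seq{x})}{|\rlccode_{m,n}|^2} = \frac{1}{q^{2m}} \sum_{\seq{u}, \seq{v} \in \field{q}^m} \seq{x}^{\sow(\seq{u}\rmat{A},\seq{v}\rmat{A})}
\]
holds deterministically, absorbing the randomness of $|\rlccode_{m,n}|$ into the clean prefactor $q^{-2m}$. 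I would then partition the sum by the rank of the $2 \times m$ matrix with rows $\seq{u},\seq{v}$. The rank-$0$ pair $(\seq{0},\seq{0})$ contributes $x_{\sope_{00}}^n$. For the rank-$1$ pairs, I would parametrize $(\seq{u},\seq{v}) = (\alpha\seq{w}, \beta\seq{w})$ with $\seq{w}\neq \seq{0}$ and $(\alpha,\beta)$ ranging over a fixed orbit $S \in \sop_{00}^c$; since $2$-goodness implies $1$-goodness, $\seq{w}\rmat{A}$ is uniform on $\field{q}^n$, so coordinate-wise each symbol of $(\alpha\seq{w}\rmat{A}, \beta\seq{w}\rmat{A})$ lands in $\sope_{00}$ with probability $1/q$ and in $S$ with probability $(q-1)/q$, giving expectation $(x_{\sope_{00}}/q + (q-1)x_S/q)^n$. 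A stabilizer count shows there are exactly $q^m - 1$ rank-$1$ pairs per orbit $S$ (each such pair admitting $q-1$ representations under $\seq{w} \mapsto c\seq{w}$). For the rank-$2$ pairs, $2$-goodness makes $(\seq{u}\rmat{A},\seq{v}\rmat{A})$ uniform on $\field{q}^n \times \field{q}^n$, and Lemma~\ref{le:CompleteSet} gives expectation $(x_{\sope_{00}}/q^2 + (q-1)\sum_{S \in \sop_{00}^c} x_S / q^2)^n$, with $(q^m-1)(q^m-q)$ such pairs. Assembling the three contributions yields the first formula.

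For the parity-check case, I would start from $\rlccodeB_{m,n} = \orth{\rlccode_{m,n}}$ and invoke Theorem~\ref{th:MacWilliams}, which gives
\[
E[W_{\rlccodeB_{m,n},\rlccodeB_{m,n}}(\seq{x})] = E\left[\frac{W_{\rlccode_{m,n},\rlccode_{m,n}}(\seq{x}\mat{K})}{|\rlccode_{m,n}|^2}\right],
\]
so the second formula follows from the first by the substitution $\seq{x} \mapsto \seq{x}\mat{K}$. Using \eqref{eq:MacWilliams3} I would compute $(\seq{x}\mat{K})_{\sope_{00}} = x_{\sope_{00}} + (q-1)\sum_{S \in \sop_{00}^c} x_S$ and, for $T \in \sop_{00}^c$, $(\seq{x}\mat{K})_T = x_{\sope_{00}} + q\, x_{T^\perp} - \sum_{S \in \sop_{00}^c} x_S$, where $T^\perp$ denotes the unique orbit in $\sop_{00}^c$ with $T \subseteq \orth{T^\perp}$. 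The bijectivity of $T \mapsto T^\perp$ on $\sop_{00}^c$ drives the simplifications: the inner expression of the rank-$1$ term reduces to $x_{\sope_{00}} + (q-1)x_{T^\perp}$, so the sum over $T$ equals $\sum_{S \in \sop_{00}^c}(x_{\sope_{00}} + (q-1)x_S)^n$; and the inner expression of the rank-$2$ term telescopes all the way to $x_{\sope_{00}}$, so that summand becomes $x_{\sope_{00}}^n$. The rank-$0$ term gives the remaining $(x_{\sope_{00}} + (q-1)\sum_S x_S)^n$, and collecting the three pieces yields the claimed formula.

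I expect the main obstacle to be the parity-check algebra. The entries of $\seq{x}\mat{K}$ mix sums of $x_S$ with coefficients $|S|$ and $-1$ determined by orthogonality, and the telescoping that kills the $\sum_S x_S$ contributions in both the rank-$1$ and rank-$2$ terms hinges on the coefficients $1/q$ and $1/q^2$ being exactly those produced by the first formula, so no shortcut seems to avoid expanding $(\seq{x}\mat{K})_T$ and pushing the cancellation through by hand. A secondary subtlety is the stabilizer accounting in the rank-$1$ step of the generator case, where forgetting to divide by $q-1$ would inflate the count by a factor of $q-1$.
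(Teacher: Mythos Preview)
Your proposal is correct and follows essentially the same approach as the paper: the paper likewise rewrites the normalized enumerator as $q^{-2m}\sum_{\seq{u},\seq{u}'}\seq{x}^{\sow(\seq{u}\rmat{A},\seq{u}'\rmat{A})}$, splits according to the rank of $(\seq{u},\seq{u}')$, applies $1$- and $2$-goodness, and then obtains the parity-check formula via $\rlccodeB_{m,n}=\orth{(\rlccode_{m,n})}$ and Theorem~\ref{th:MacWilliams}. Your write-up is actually more explicit than the paper's on the $\seq{x}\mapsto\seq{x}\mat{K}$ substitution, where the paper simply records the outcome.
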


\begin{IEEEproof}
Since $\rlccode_{m,n} = \{\seq{u}\rmat{A}: \seq{u} \in \field{q}^m\}$, we have
\begin{IEEEeqnarray*}{l}
E\left[\frac{W_{\rlccode_{m,n},\rlccode_{m,n}}(\seq{x})}{|\rlccode_{m,n}|^2}\right] \\
\findent = E\left[\frac{\sum_{\seq{v}, \seq{v}' \in \rlccode_{m,n}} \seq{x}^{\sow(\seq{v}, \seq{v}')}}{|\rlccode_{m,n}|^2}\right] \\
\findent = \frac{1}{q^{2m}} E\left[\sum_{\seq{u}, \seq{u}' \in \field{q}^m} \seq{x}^{\sow(\seq{u}\rmat{A}, \seq{u}'\rmat{A})}\right] \\
\findent = \frac{1}{q^{2m}} E\Bigg[ \seq{x}^{\sow(\seq{0}, \seq{0})} + \sum_{\seq{u}=\seq{0}, \seq{u}'\ne\seq{0}} \seq{x}^{\sow(\seq{u}\rmat{A}, \seq{u}'\rmat{A})} \\
\findent \nullrel \breakop{+} \sum_{\seq{u}\ne\seq{0}, \seq{u}'=\seq{0}} \seq{x}^{\sow(\seq{u}\rmat{A}, \seq{u}'\rmat{A})} + \sum_{a \in \nzfield{q}} \sum_{\seq{u} \ne 0} \seq{x}^{\sow(\seq{u}\rmat{A}, a\seq{u}\rmat{A})} \\
\findent \nullrel \breakop{+} \sum_{\seq{u}, \seq{u}'\text{ is linearly independent}} \seq{x}^{\sow(\seq{u}\rmat{A}, \seq{u}'\rmat{A})} \Bigg] \\
\findent = \frac{1}{q^{2m}} \Bigg( x_{\sope_{00}}^n + \sum_{\seq{u}' \ne 0} E\left[ \seq{x}^{\sow(\seq{0}, \seq{u}'\rmat{A})}\right] \\
\findent \nullrel \breakop{+} \sum_{\seq{u} \ne 0} E\left[ \seq{x}^{\sow(\seq{u}\rmat{A}, \seq{0})}\right] + \sum_{a \in \nzfield{q}} \sum_{\seq{u} \ne 0} E\left[ \seq{x}^{\sow(\seq{u}\rmat{A}, a\seq{u}\rmat{A})} \right] \\
\findent \nullrel \breakop{+} \sum_{\seq{u}, \seq{u}'\text{ is linearly independent}} E\left[\seq{x}^{\sow(\seq{u}\rmat{A}, \seq{u}'\rmat{A})}\right] \Bigg) \\
\findent \eqvar{(a)} \frac{1}{q^{2m}} \Bigg( x_{\sope_{00}}^n + (q^m-1) \sum_{\seq{v}' \in \field{q}^n} q^{-n} \seq{x}^{\sow(\seq{0}, \seq{v}')} \\
\findent \nullrel \breakop{+} (q^m-1) \sum_{\seq{v} \in \field{q}^n} q^{-n} \seq{x}^{\sow(\seq{v}, \seq{0})} \\
\findent \nullrel \breakop{+} (q^m-1) \sum_{a\in \nzfield{q}} \sum_{\seq{v} \in \field{q}^n} q^{-n} \seq{x}^{\sow(\seq{v}, a\seq{v})} \\
\findent \nullrel \breakop{+} (q^m-1)(q^m-q) \sum_{\seq{v}, \seq{v}' \in \field{q}^n} q^{-2n} \seq{x}^{\sow(\seq{v}, \seq{v}')} \Bigg) \\
\findent = \frac{1}{q^{2m}} \Bigg[ x_{\sope_{00}}^n + (q^m-1) \Bigg( \sum_{v' \in \field{q}} q^{-1} x_{\overline{(0, v')}} \Bigg)^n \\
\findent \nullrel \breakop{+} (q^m-1) \Bigg( \sum_{v \in \field{q}} q^{-1} x_{\overline{(v, 0)}} \Bigg)^n \\
\findent \nullrel \breakop{+} (q^m-1) \sum_{a\in \nzfield{q}} \Bigg( \sum_{v \in \field{q}} q^{-1} x_{\overline{(v, av)}} \Bigg)^n \\
\findent \nullrel \breakop{+} (q^m-1)(q^m-q) \Bigg( \sum_{v, v' \in \field{q}} q^{-2} x_{\overline{(v, v')}} \Bigg)^n \Bigg] \\
\findent = \frac{1}{q^{2m}} \Bigg[ x_{\sope_{00}}^n + (q^m-1) \sum_{S \in \sop_{00}^c} \left(\frac{1}{q} x_{\sope_{00}} + \frac{q-1}{q} x_{S}\right)^n \\
\findent \nullrel \breakop{+} (q^m-1)(q^m-q) \Bigg( \frac{1}{q^2} x_{\sope_{00}} + \frac{q-1}{q^2} \sum_{S \in \sop_{00}^c} x_{S} \Bigg)^n \Bigg]
\end{IEEEeqnarray*}
where (a) follows from the $2$-goodness of $\rmat{A}$.

Next, since $\rlccodeB_{m,n} \eqdef \{\seq{v}\in \field{q}^n: \rmat{A}\seq{v}^T = \seq{0}\}$, it is clear that $\rlccodeB_{m,n} = \orth{(\rlccode_{m,n})}$, and then Theorem~\ref{th:MacWilliams} shows that
\begin{IEEEeqnarray*}{rCl}
E[W_{\rlccodeB_{m,n},\rlccodeB_{m,n}}(\seq{x})]
&= &E\left[\frac{W_{\rlccode_{m,n},\rlccode_{m,n}}(\seq{x} \mat{K})}{|\rlccode_{m,n}|^2}\right] \\
&= &\frac{(q^m-1)(q^m-q)}{q^{2m}} x_{\sope_{00}}^n \\
& &\breakop{+} \frac{q^m-1}{q^{2m}} \sum_{S \in \sop_{00}^c} \left(x_{\sope_{00}} + (q-1)x_{S}\right)^n \\
& &\breakop{+} \frac{1}{q^{2m}} \left(x_{\sope_{00}} + (q-1) \sum_{S \in \sop_{00}^c} x_S\right)^n.
\end{IEEEeqnarray*}
\end{IEEEproof}

\begin{remark}
Note that the size of $\rlccode_{m,n}$ is random and it may be less than $q^m$. For this reason, we give in Theorem~\ref{th:RLC} the expectation of the ratio $W_{\rlccode_{m,n},\rlccode_{m,n}}(\seq{x}) / |\rlccode_{m,n}|^2$ instead of $W_{\rlccode_{m,n},\rlccode_{m,n}}(\seq{x})$. Theorem~\ref{th:RLC} and Lemma~\ref{le:SecondMoment} show that
\begin{IEEEeqnarray*}{l}
E\left[\frac{W_{\rlccode_{m,n}}(x) W_{\rlccode_{m,n}}(y)}{|\rlccode_{m,n}|^2}\right] \\
\findent = \frac{1}{q^{2m}} + \frac{q^m-1}{q^{2m}} \left(\frac{1}{q} + \frac{q-1}{q} x\right)^n \\
\findent \nullrel \breakop{+} \frac{q^m-1}{q^{2m}} \left(\frac{1}{q} + \frac{q-1}{q} y\right)^n \\
\findent \nullrel \breakop{+} \frac{(q^m-1)(q-1)}{q^{2m}} \left(\frac{1}{q} + \frac{q-1}{q} xy\right)^n \\
\findent \nullrel \breakop{+} \frac{(q^m-1)(q^m-q)}{q^{2m}} \bigg(\frac{1}{q} + \frac{q-1}{q} x\bigg)^n \bigg(\frac{1}{q} + \frac{q-1}{q} y\bigg)^n.
\end{IEEEeqnarray*}
This is equivalent to the formula \cite[Eq. (2)]{Blinovsky200912}, in which a linear code is allowed to contain duplicated codewords, so that $E[(A_0(\rlccode_{m,n}))^2] > 1$.
\end{remark}

Now let us show that the $2$-good property is reflected in the second-order weight distribution given by Theorem~\ref{th:RLC}.

\begin{proposition}\label{pr:PairwiseIndependence}
Let $\mathcal{D}$ be a random linear code having the same average second-order weight distribution as $\rlccodeB_{m,n}$. Then we have
\[
P\{\seq{u} \in \Xi_n(\mathcal{D})\} = q^{-m} \qquad \mbox{for any nonzero $\seq{u} \in \field{q}^n$}
\]
and
\[
P\{\seq{u} \in \Xi_n(\mathcal{D}), \seq{v} \in \Xi_n(\mathcal{D})\} = q^{-2m}
\]
for any linearly independent $\seq{u}, \seq{v} \in \field{q}^n$.%
\footnote{There is also a similar result for $\rlccode_{m,n}$, but it will be slightly complicated since we only know the expectation of the ratio $W_{\rlccode_{m,n},\rlccode_{m,n}}(\seq{x}) / |\rlccode_{m,n}|^2$ and the fact that the coding rate of $\rlccode_{m,n}$ is approximately $m/n$ with high probability for sufficiently large $n$.}
\end{proposition}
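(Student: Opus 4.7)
The plan is to express both probabilities through Lemma~\ref{le:Permutation} and then read off the needed coefficients from the closed form of $E[W_{\rlccodeB_{m,n},\rlccodeB_{m,n}}(\seq{x})]$ in Theorem~\ref{th:RLC}. The hypothesis that $\mathcal{D}$ shares this average second-order weight enumerator with $\rlccodeB_{m,n}$ is exactly what is needed. For the first identity, I would set $\seq{v}=\seq{0}$ in Lemma~\ref{le:Permutation} applied to $\mathcal{U}=\mathcal{V}=\mathcal{D}$; since $\mathcal{D}$ is linear and the monomial map $\Xi_n$ fixes $\seq{0}$, the event $\seq{0}\in\Xi_n(\mathcal{D})$ is certain, hence $P\{\seq{u}\in\Xi_n(\mathcal{D})\}$ coincides with $E[A_{\seq{i}}(\mathcal{D},\mathcal{D})]/A_{\seq{i}}(\field{q}^n,\field{q}^n)$ for $\seq{i}=\sow(\seq{u},\seq{0})$, i.e., the tuple with $i_{\sope_{00}}=n-w$, $i_{\sope_{10}}=w$ (where $w=\weight(\seq{u})$) and zeros elsewhere.

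For this $\seq{i}$ the denominator equals ${n \choose w}(q-1)^w$ by Lemma~\ref{le:CompleteSet}. In the numerator, the first summand of the enumerator in Theorem~\ref{th:RLC} is a pure power of $x_{\sope_{00}}$ and contributes nothing (since $w>0$); the second summand contributes only through the term indexed by $S=\sope_{10}$, giving $(q^m-1)q^{-2m}{n \choose w}(q-1)^w$; the third summand, expanded by the multinomial theorem, adds $q^{-2m}{n \choose w}(q-1)^w$. Summing these and dividing by the denominator simplifies cleanly to $q^{-m}$, as claimed.

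For the second identity, apply Lemma~\ref{le:Permutation} directly with $\seq{i}=\sow(\seq{u},\seq{v})$. The key observation is the one highlighted in the remark contrasting joint weight and second-order weight: $\seq{u}$ and $\seq{v}$ are linearly independent if and only if the support of $\seq{i}$ outside $\sope_{00}$ contains at least two orbits. Under this assumption the first summand of Theorem~\ref{th:RLC}'s formula still vanishes, and so does the second, since each of its summands $(x_{\sope_{00}}+(q-1)x_S)^n$ only produces monomials with support in $\{\sope_{00},S\}$ for a single $S$. Only the third summand contributes, yielding a coefficient of $q^{-2m}{n \choose \seq{i}}(q-1)^{n-i_{\sope_{00}}}$. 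Dividing by the denominator $A_{\seq{i}}(\field{q}^n,\field{q}^n)$ furnished by Lemma~\ref{le:CompleteSet} leaves exactly $q^{-2m}$.

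The only delicate step is the bookkeeping in routing each $\seq{i}$ to the correct summand of the three-term expression for $E[W_{\rlccodeB_{m,n},\rlccodeB_{m,n}}(\seq{x})]$; everything else is mechanical coefficient extraction. It is precisely the linear-independence criterion encoded in the second-order weight that sifts out the first two summands in the second identity, and it is this mechanism that shows transparently how the $2$-good property of $\rmat{A}$ is captured by the second-order weight distribution.
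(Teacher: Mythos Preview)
Your proof is correct and follows essentially the same approach as the paper: apply Lemma~\ref{le:Permutation} (with $\seq{v}=\seq{0}$ for the first identity and with the given $\seq{v}$ for the second), and then extract the relevant coefficient from the closed-form enumerator in Theorem~\ref{th:RLC}, using Lemma~\ref{le:CompleteSet} for the denominator. The paper's own proof is terser, simply citing Lemma~\ref{le:Permutation} and Theorem~\ref{th:RLC} without spelling out the coefficient extraction you carry out explicitly.
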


\begin{proof}
From Lemma~\ref{le:Permutation} and Theorem~\ref{th:RLC}, it follows that for any nonzero $\seq{u}\in \field{q}^n$,
\begin{IEEEeqnarray*}{rCl}
P\{\seq{u} \in \Xi_n(\mathcal{D})\}
&= &P\{\seq{u} \in \Xi_n(\mathcal{D}), \seq{0} \in \Xi_n(\mathcal{D})\} \\
&= &\frac{E[A_{\sow(\seq{u},\seq{0})}(\rlccodeB_{m,n}, \rlccodeB_{m,n})]}{A_{\sow(\seq{u},\seq{0})}(\field{q}^n, \field{q}^n)} \\
&= &q^{-m}.
\end{IEEEeqnarray*}
Similarly, for any linearly independent $\seq{u}, \seq{v} \in \field{q}^n$, it follows from Lemma~\ref{le:Permutation} and Theorem~\ref{th:RLC} that 
\begin{IEEEeqnarray*}{rCl}
P\{\seq{u} \in \Xi_n(\mathcal{D}), \seq{v} \in \Xi_n(\mathcal{D})\}
&= &\frac{E[A_{\sow(\seq{u},\seq{v})}(\rlccodeB_{m,n}, \rlccodeB_{m,n})]}{A_{\sow(\seq{u},\seq{v})}(\field{q}^n, \field{q}^n)} \\
&= &q^{-2m}.
\end{IEEEeqnarray*}
\end{proof}

Clearly, the second identity in Proposition~\ref{pr:PairwiseIndependence} reflects the $2$-good property. The $2$-good property is very useful for solving problems that involve pairwise or triple-wise properties of sequences (see \cite{Yang201011}). Now based on Proposition~\ref{pr:PairwiseIndependence}, we shall provide an example of linear intersecting codes to show the fundamental position and potential application of second-order weight distribution in some problems in coding theory and combinatorics.

A pair of vectors in $\field{q}^n$ is said to be \emph{intersecting} if there is at least one position $i$ such that their $i$th components are both nonzero. A linear code is said to be \emph{intersecting} if its any two linearly independent codewords intersect. Recall that linear intersecting codes has a close relation to many problems in combinatorics, such as separating systems \cite{Pradhan197609}, qualitative independence \cite{Cohen199411}, frameproof codes \cite{Cohen200005}, and so on.

Let $\mathcal{D}$ be a random linear code having the same average second-order weight distribution as $\rlccodeB_{n-m,n}$ with $0< m < n$. From Proposition~\ref{pr:PairwiseIndependence}, it follows that for any linearly independent $\seq{u}, \seq{v} \in \field{q}^n$, $P\{\seq{u}, \seq{v} \in \Xi_n(\mathcal{D})\} = q^{2(m-n)}$. Then the probability that $\Xi_n(\mathcal{D})$ contains a given pair of non-intersecting and linearly independent vectors is $q^{2(m-n)}$, and hence the probability that $\Xi_n(\mathcal{D})$ contains at least one pair of non-intersecting and linearly independent vectors is bounded above by
\[
(2q-1)^{n} q^{2(m-n)} = q^{-2n(1-m/n-\log_q(2q-1)/2)}.
\]
Consequently, the probability that $\Xi_n(\mathcal{D})$ is not intersecting converges to $0$ as $n \to \infty$ whenever the ratio
\[
R \eqdef \frac{m}{n} < 1 - \frac{1}{2} \log_q(2q-1). \quad \mbox{(cf. \cite[Theorem~3.2]{Cohen199411})}
\]
The right hand side of the inequality is the asymptotic (random coding) lower bound of maximum rate of linear intersecting codes. In order to finally relate this bound with the random coding rate of $\Xi_n(\mathcal{D})$, we still need to study the coding rate of sample codes in $\Xi_n(\mathcal{D})$. From Proposition~\ref{pr:PairwiseIndependence} it follows that
\[
E[|\Xi_n(\mathcal{D})|]
= \sum_{\seq{u} \in \field{q}^n} P\{\seq{u} \in \Xi_n(\mathcal{D})\}
= q^{m} + 1 - q^{m-n}
\]
and
\begin{IEEEeqnarray*}{l}
E[(|\Xi_n(\mathcal{D})| - E[|\Xi_n(\mathcal{D})|])^2] \\
\findent = E[|\Xi_n(\mathcal{D})|^2] - E[|\Xi_n(\mathcal{D})|]^2 \\
\findent = \sum_{\seq{u}, \seq{v} \in \field{q}^n} (P\{\seq{u} \in \Xi_n(\mathcal{D}), \seq{v} \in \Xi_n(\mathcal{D})\} \\
\findent \nullrel \breakop{-} P\{\seq{u} \in \Xi_n(\mathcal{D})\} P\{\seq{v} \in \Xi_n(\mathcal{D})\}) \\
\findent = \sum_{\substack{\seq{u}, \seq{v} \in \field{q}^n\setminus\{\seq{0}\}\\ \text{$\seq{u}, \seq{v}$ is linearly dependent}}} (q^{m-n} - q^{2(m-n)}) \\
\findent = q^{m-n}(q-1)(q^n-1)(1-q^{m-n}) \\
\findent < q^{m+1}.
\end{IEEEeqnarray*}
Applying Chebyshev's inequality then gives
\begin{IEEEeqnarray*}{l}
P\left\{\big||\Xi_n(\mathcal{D})|-q^m\big| \ge 2nq^{(m+1)/2}\right\} \\
\findent \le P\left\{\big||\Xi_n(\mathcal{D})|-E[|\Xi_n(\mathcal{D})|]\big| \ge nq^{(m+1)/2}\right\} \\
\findent \le \frac{1}{n^2}.
\end{IEEEeqnarray*}
Using $R=m/n$ and the simple inequality
\[
|\ln (1+x)| \le 2|x| \qquad \text{for $\textstyle |x|<\frac{1}{2}$}
\]
we finally obtain
\[
P\left\{\left|\frac{1}{n} \log_q |\Xi_n(\mathcal{D})| - R\right| < \frac{4q^{(1-nR)/2}}{\ln q}\right\} \ge 1-\frac{1}{n^2}
\]
for sufficiently large $n$. Roughly speaking, the coding rate of $\Xi_n(\mathcal{D})$ is $R$ with high probability, provided the length $n$ is large enough.

\section{Conclusion}\label{sec:Conclusion}

We established the method of second-order weight distributions. An analogue of MacWilliams identities for second-order weight distributions was proved. We computed the second-order weight distributions of several important code ensembles and discussed the application of second-order weight distribution in random coding approach. The obtained second-order weight distributions are very complex, so understanding their significance will be our future work.

\section*{Acknowledgment}

The author would like to thank Dr. T. Honold for his help in providing background information about joint weight. The author would also like to thank the Associate Editor, G. Cohen, and the anonymous reviewers for their helpful comments.

\appendices
\section{Two Lemmas for The Proof of Theorem~\ref{th:MacWilliams}}
\label{appsec:MacWilliams}

\begin{lemma}\label{le:MacWilliams1}
Let $V$ be a subspace of $\field{q}^n$. Then
\begin{equation}\label{eq:MacWilliams1a}
1_{\orth{V}}(\seq{v}') = \frac{1}{|V|} \sum_{\seq{v} \in V} \chi(\seq{v} \cdot \seq{v}')
\end{equation}
where
\begin{equation}\label{eq:MacWilliams1b}
\chi(v) \eqdef e^{2\pi i \tr(v) / p} \qquad \forall v \in \field{q}
\end{equation}
\begin{equation}\label{eq:MacWilliams1c}
\tr(v) \eqdef v + v^p + \cdots + v^{p^r-1} \qquad \forall v \in \field{q}.
\end{equation}
\end{lemma}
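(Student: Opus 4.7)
The plan is to verify the identity by a standard character-sum argument, handling the two cases $\seq{v}' \in \orth{V}$ and $\seq{v}' \notin \orth{V}$ separately.

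First, I would record a few preliminary facts about $\chi$ that will be used as black boxes. Since the trace $\tr: \field{q} \to \field{p}$ is surjective (a standard fact for finite field extensions), there exists $v \in \field{q}$ with $\tr(v) = 1$, so $\chi(v) = e^{2\pi i/p} \ne 1$, i.e., $\chi$ is a nontrivial character of the additive group of $\field{q}$. From this, the orbit trick gives $\sum_{a \in \field{q}} \chi(a) = 0$: pick $b$ with $\chi(b) \ne 1$ and use $\sum_a \chi(a) = \sum_a \chi(a+b) = \chi(b)\sum_a \chi(a)$.

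Next, the case $\seq{v}' \in \orth{V}$ is immediate: every summand equals $\chi(0) = 1$, and the sum is $|V|$, so the right-hand side of \eqref{eq:MacWilliams1a} equals $1 = 1_{\orth{V}}(\seq{v}')$. For the case $\seq{v}' \notin \orth{V}$, I would consider the $\field{q}$-linear map $\phi: V \to \field{q}$ given by $\seq{v} \mapsto \seq{v} \cdot \seq{v}'$. Because $\seq{v}' \notin \orth{V}$, the image of $\phi$ is a nonzero $\field{q}$-subspace of $\field{q}$, hence equals $\field{q}$; by the rank-nullity theorem every fiber has cardinality $|V|/q$. Therefore
\[
\sum_{\seq{v} \in V} \chi(\seq{v} \cdot \seq{v}') = \frac{|V|}{q} \sum_{a \in \field{q}} \chi(a) = 0,
\]
matching $1_{\orth{V}}(\seq{v}') = 0$ on the left-hand side.

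There is no real obstacle here; the only mildly delicate point is ensuring that $\chi$ is a nontrivial character, which is precisely why the definition uses $\tr$ rather than a naive choice. Everything else is a routine application of the orthogonality relations for characters of a finite abelian group specialized to the additive group of $\field{q}$.
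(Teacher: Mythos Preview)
Your proof is correct and follows essentially the same approach as the paper: split into the cases $\seq{v}' \in \orth{V}$ and $\seq{v}' \notin \orth{V}$, observe that the $\field{q}$-linear map $\seq{v} \mapsto \seq{v}\cdot\seq{v}'$ has image either $\{0\}$ or all of $\field{q}$, and in the latter case reduce the sum to a multiple of $\sum_{a\in\field{q}}\chi(a)=0$. Your write-up is slightly more explicit about the nontriviality of $\chi$ and the fiber count via rank--nullity, but the argument is the same.
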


\begin{IEEEproof}
First note that $\tr(v)$ is an $\field{p}$-module epimorphism of $\field{q}$ onto $\field{p}$, and hence $\chi(v)$ is a homomorphism from the additive group of $\field{q}$ to $\nzcomplex$.

For a fixed $\seq{v}' \in \field{q}$, the mapping $\tau: V \to \field{q}$ given by $\seq{v} \mapsto \seq{v} \cdot \seq{v}'$ is an $\field{q}$-module homomorphism of $V$ into $\field{q}$, and hence the image set $\tau(V)$ is also a vector space over $\field{q}$, which must be either $\{0\}$ or $\field{q}$.

If $\seq{v}' \in \orth{V}$, then $\chi(\seq{v} \cdot \seq{v}') = \chi(0) = 1$ for all $\seq{x} \in V$, and hence identity \eqref{eq:MacWilliams1a} holds. If however $\seq{v}' \not\in \orth{V}$, then
\[
\frac{1}{|V|} \sum_{\seq{v} \in V} \chi(\seq{v} \cdot \seq{v}') = \frac{1}{q} \sum_{v \in \field{q}} \chi(v) = 0.
\]
The proof is complete.
\end{IEEEproof}

\begin{lemma}\label{le:MacWilliams2}
For any $\seq{u}, \seq{v} \in \field{q}^n$,
\[
\sum_{\seq{u}', \seq{v}' \in \field{q}^n} \chi(\seq{u} \cdot \seq{u}' + \seq{v} \cdot \seq{v}') \seq{x}^{\sow(\seq{u}',\seq{v}')}
= (\seq{x} \mat{K})^{\sow(\seq{u}, \seq{v})}
\]
where $\mat{K}$ is defined by \eqref{eq:MacWilliams3}.
\end{lemma}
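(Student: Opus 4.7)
The plan is to factor both sides coordinate-by-coordinate and then verify the identity at the single-coordinate level via a direct orbit-by-orbit character calculation.

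First I would observe that $\seq{u}\cdot\seq{u}' + \seq{v}\cdot\seq{v}' = \sum_{i=1}^n (u_i u_i' + v_i v_i')$, that $\chi$ is a homomorphism from $(\field{q},+)$ to $\nzcomplex$, and that
\[
\seq{x}^{\sow(\seq{u}', \seq{v}')} = \prod_{i=1}^n x_{\overline{(u_i', v_i')}}, \qquad (\seq{x}\mat{K})^{\sow(\seq{u},\seq{v})} = \prod_{i=1}^n (\seq{x}\mat{K})_{\overline{(u_i, v_i)}},
\]
where $(\seq{x}\mat{K})_T = \sum_{S\in\sop} x_S \matentry{K}_{S,T}$. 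These identities let me distribute the sum over $(\seq{u}',\seq{v}') \in (\field{q}^2)^n$ into a product of $n$ identical one-coordinate sums. Hence it suffices to prove the $n=1$ case, namely that for every $(u,v) \in \field{q}^2$,
\[
\sum_{(u',v') \in \field{q}^2} \chi(uu' + vv')\, x_{\overline{(u',v')}} = \sum_{S\in\sop} x_S \matentry{K}_{S,\overline{(u,v)}}.
\]

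Next I would partition the sum on the left according to the orbit $T \in \sop$ containing $(u',v')$, reducing the identity to showing that, for every $T, S \in \sop$ with $(u,v)$ any representative of $S$,
\[
\sum_{(u',v') \in T} \chi(uu' + vv') = \matentry{K}_{T,S}.
\]
I then run the following short case analysis. If $T = \sope_{00}$, the sum is $\chi(0) = 1 = |T| = \matentry{K}_{T,S}$, and indeed $S \subseteq T^\perp = \field{q}^2$. If $T$ is one of the $q+1$ nonzero orbits, then $T = L \setminus \{(0,0)\}$ for a one-dimensional subspace $L \subseteq \field{q}^2$, and the map $\tau: L \to \field{q}$ defined by $\tau(u',v') \eqdef uu' + vv'$ is $\field{q}$-linear. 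Whether $\tau$ vanishes on $L$ is, by $\field{q}$-linearity and the $\nzfield{q}$-invariance of $S$, exactly the condition $S \subseteq T^\perp$. If $\tau \equiv 0$ on $L$, the sum is $|T| = q-1 = \matentry{K}_{T,S}$; otherwise $\tau$ is an $\field{q}$-linear isomorphism, so it carries $T$ bijectively onto $\nzfield{q}$, and the sum becomes $\sum_{w\in\nzfield{q}} \chi(w) = -1 + \sum_{w\in\field{q}}\chi(w) = -1$, using the standard fact that the nontrivial additive character $\chi$ of $\field{q}$ sums to zero over $\field{q}$ (already exploited inside Lemma~\ref{le:MacWilliams1}). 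In both sub-cases the value matches the definition of $\matentry{K}_{T,S}$ in \eqref{eq:MacWilliams3}.

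I expect the only subtle step to be recognizing that the orthogonality condition in the definition of $\mat{K}$ is symmetric and orbit-theoretic: whether $S \subseteq T^\perp$ holds depends only on the $\field{q}$-spans of $S$ and $T$ (not on chosen representatives), and is equivalent to $T \subseteq S^\perp$. Once that symmetry is in place, matching the character-sum dichotomy (vanishing vs.\ non-vanishing of $\tau$) against the indicator dichotomy in \eqref{eq:MacWilliams3} is immediate, and combining with the coordinate-wise factorization from the first paragraph yields the lemma.
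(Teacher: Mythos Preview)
Your proof is correct and follows essentially the same route as the paper's: factor the double sum coordinate-by-coordinate, group the one-coordinate sum by the orbit of $(u',v')$, and evaluate the resulting character sum over each orbit to recover $\matentry{K}_{S,T}$. The only cosmetic difference is that the paper outsources the character-sum dichotomy to Lemma~\ref{le:MacWilliams1} (applied to the span of the orbit, then subtracting the contribution of $(0,0)$), whereas you carry out that short computation directly.
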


\begin{IEEEproof}
First, we have
\begin{IEEEeqnarray*}{l}
\sum_{\seq{u}', \seq{v}' \in \field{q}^n} \chi(\seq{u} \cdot \seq{u}' + \seq{v} \cdot \seq{v}') \seq{x}^{\sow(\seq{u}',\seq{v}')} \\
\findent = \sum_{\seq{u}', \seq{v}' \in \field{q}^n} \prod_{i=1}^n \chi(u_i u_i' + v_i v_i') x_{\overline{(u_i',v_i')}} \\
\findent = \prod_{i=1}^n \sum_{u', v' \in \field{q}} \chi((u_i, v_i) \cdot (u', v')) x_{\overline{(u',v')}} \\
\findent = \prod_{i=1}^n \sum_{S \in \sop} x_S \sum_{(u', v') \in S} \chi((u_i, v_i) \cdot (u', v')).
\end{IEEEeqnarray*}
From Lemma~\ref{le:MacWilliams1}, it follows that
\begin{subnumcases}{\sum_{(u', v') \in S} \chi((u_i, v_i) \cdot (u', v')) =}
|S|, &$(u_i, v_i) \in \orth{S}$ \IEEEnonumber \\
-1, &$(u_i, v_i) \not\in \orth{S}$. \IEEEnonumber
\end{subnumcases}
Therefore we have
\begin{IEEEeqnarray*}{l}
\sum_{\seq{u}', \seq{v}' \in \field{q}^n} \chi(\seq{u} \cdot \seq{u}' + \seq{v} \cdot \seq{v}') \seq{x}^{\sow(\seq{u}',\seq{v}')} \\
\findent = \prod_{i=1}^n \sum_{S \in \sop} x_S \matentry{K}_{S, \overline{(u_i, v_i)}} \\
\findent = \prod_{T \in \sop} \left(\sum_{S \in \sop} x_S \matentry{K}_{S, T}\right)^{\pi_T(\sow(\seq{u}, \seq{v}))}
\end{IEEEeqnarray*}
as desired.
\end{IEEEproof}

\section{Properties of Uniformly Distributed Random Matrices}
\label{appsec:RLC}

\begin{theorem}\label{th:RLC2}
Let $\rmat{G}$ be a random $m\times n$ matrix uniformly distributed over the set $\field{q}^{m\times n}$ of all $m\times n$ matrices over $\field{q}$. Then for any invertible matrix $\mat{U} \in \field{q}^{m\times m}$, the product $\mat{U} \rmat{G}$ is uniformly distributed over $\field{q}^{m\times n}$.
\end{theorem}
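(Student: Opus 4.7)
The statement is a standard invariance property of the uniform distribution under a bijective linear map, so the plan is to reduce it to this observation.

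The approach is a direct change-of-variables argument. First, I would fix an arbitrary target matrix $\mat{A} \in \field{q}^{m\times n}$ and compute $P\{\mat{U}\rmat{G} = \mat{A}\}$. Since $\mat{U}$ is invertible over $\field{q}$, the equation $\mat{U}\rmat{G} = \mat{A}$ has the unique solution $\rmat{G} = \mat{U}^{-1}\mat{A}$, so
\[
P\{\mat{U}\rmat{G} = \mat{A}\} = P\{\rmat{G} = \mat{U}^{-1}\mat{A}\} = q^{-mn},
\]
where the last equality uses that $\rmat{G}$ is uniformly distributed over the $q^{mn}$-element set $\field{q}^{m\times n}$. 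Since $\mat{A}$ was arbitrary, $\mat{U}\rmat{G}$ assigns equal probability $q^{-mn}$ to every element of $\field{q}^{m\times n}$, which is precisely the statement that it is uniformly distributed.

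Equivalently, the map $\mat{X} \mapsto \mat{U}\mat{X}$ is a bijection of the finite set $\field{q}^{m\times n}$ to itself (its inverse being left multiplication by $\mat{U}^{-1}$), and the pushforward of the uniform distribution under any bijection of a finite set is again the uniform distribution.

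There is no real obstacle here; the only thing to note is that invertibility of $\mat{U}$ is used exactly once, to guarantee that the preimage of every $\mat{A}$ is a singleton rather than empty or larger. This is also the reason the analogous statement with a non-invertible $\mat{U}$ would fail: the pushforward would concentrate on the image subspace rather than spreading uniformly over $\field{q}^{m\times n}$.
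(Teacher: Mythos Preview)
Your proof is correct and essentially identical to the paper's own proof: the paper observes that left multiplication by an invertible $\mat{U}$ defines a surjective (hence bijective) linear map $\mat{U}^*:\field{q}^{m\times n}\to\field{q}^{m\times n}$, so the pushforward of the uniform distribution is uniform. Your direct computation $P\{\mat{U}\rmat{G}=\mat{A}\}=P\{\rmat{G}=\mat{U}^{-1}\mat{A}\}=q^{-mn}$ just makes this explicit.
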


\begin{IEEEproof}
For any matrix $\mat{U} \in \field{q}^{m\times m}$, we denote by $\mat{U}^*$ the mapping $\field{q}^{m\times n} \to \field{q}^{m\times n}$ given by $\mat{X} \mapsto \mat{U}\mat{X}$. It is clear that $\mat{U}^*$ is a surjective linear transformation for any invertible matrix $\mat{U} \in \field{q}^{m\times m}$, so that $\mat{U}\rmat{G}$ is uniformly distributed over $\field{q}^{m\times n}$.
\end{IEEEproof}

A corollary follows immediately.

\begin{corollary}[cf. \textnormal{\cite{Blinovsky200912}} and the references therein]\label{co:RLC}
Let $\rmat{G}$ be a random $m\times n$ matrix uniformly distributed over the set $\field{q}^{m\times n}$ of all $m\times n$ matrices over $\field{q}$. Then for any $\seq{x}, \seq{x}' \in \field{q}^m$ and $\seq{y}, \seq{y}' \in \field{q}^n$,
\begin{IEEEeqnarray*}{l}
P\{\seq{x}\rmat{G} = \seq{y}, \seq{x}'\rmat{G} = \seq{y}'\}\\
\findent =\left\{\begin{array}{ll}
1\{\seq{y} = \seq{0}, \seq{y}' = \seq{0}\}, &\seq{x} = \seq{x}' = \seq{0} \\
q^{-n}1\{\seq{y} = \seq{0}\}, &\mbox{$\seq{x} = \seq{0}$, $\seq{x}' \ne \seq{0}$} \\
q^{-n}1\{\seq{y}' = \seq{0}\}, &\mbox{$\seq{x} \ne \seq{0}$, $\seq{x}' = \seq{0}$} \\
q^{-n}1\{\seq{y}' = a\seq{y}\}, &\mbox{$\seq{x} \ne 0$, $\seq{x}' = a \seq{x}$ with $a \in \nzfield{q}$} \\
q^{-2n}, &\mbox{$\seq{x}, \seq{x}'$ are linearly independent.}
\end{array}\right.
\end{IEEEeqnarray*}
\end{corollary}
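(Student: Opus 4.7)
The plan is to case-split on the linear relationship between $\seq{x}$ and $\seq{x}'$, and in every nontrivial case reduce to Theorem~\ref{th:RLC2} by embedding the vectors of interest as the first one or two rows of an invertible $m\times m$ matrix $\mat{U}$. Once that is done, $\mat{U}\rmat{G}$ is uniformly distributed over $\field{q}^{m\times n}$, and by simply projecting onto the appropriate rows we can read off the joint distribution of $\seq{x}\rmat{G}$ and $\seq{x}'\rmat{G}$.

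The case $\seq{x} = \seq{x}' = \seq{0}$ is immediate since both products vanish deterministically. If exactly one of $\seq{x}, \seq{x}'$ is zero, say $\seq{x}=\seq{0}$ and $\seq{x}'\ne\seq{0}$, then $\seq{x}\rmat{G} = \seq{0}$ with probability one, and we only need the marginal law of $\seq{x}'\rmat{G}$. For this, I would extend the nonzero row $\seq{x}'$ to a basis of $\field{q}^m$, so that placing $\seq{x}'$ as the first row of an invertible $\mat{U} \in \field{q}^{m\times m}$ and applying Theorem~\ref{th:RLC2} makes $\mat{U}\rmat{G}$ uniform over $\field{q}^{m\times n}$; its first row $\seq{x}'\rmat{G}$ is therefore uniform on $\field{q}^n$, yielding $q^{-n}\,1\{\seq{y}=\seq{0}\}$. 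The symmetric case is identical. For the linearly dependent case $\seq{x}\ne\seq{0}$ and $\seq{x}' = a\seq{x}$ with $a\in\nzfield{q}$, linearity forces $\seq{x}'\rmat{G} = a(\seq{x}\rmat{G})$; combining this identity with the uniform marginal law of $\seq{x}\rmat{G}$ (obtained by the same extension trick) produces $q^{-n}\,1\{\seq{y}' = a\seq{y}\}$.

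The remaining (generic) case is $\seq{x}, \seq{x}'$ linearly independent. Here the plan is to extend the $2\times m$ matrix with rows $\seq{x}, \seq{x}'$, which has rank $2$, to an invertible $\mat{U}\in\field{q}^{m\times m}$ whose first two rows are $\seq{x}$ and $\seq{x}'$ respectively. By Theorem~\ref{th:RLC2}, $\mat{U}\rmat{G}$ is uniformly distributed over $\field{q}^{m\times n}$, so the two-row projection $(\seq{x}\rmat{G}, \seq{x}'\rmat{G})$ is uniformly distributed over $\field{q}^{2\times n}$, which is exactly $q^{-2n}$ for any target $(\seq{y}, \seq{y}')$.

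None of the steps looks genuinely obstructive; the only thing to be careful about is confirming that the row-extension to an invertible matrix is always available, which just amounts to extending a linearly independent set of size $1$ or $2$ to a basis of $\field{q}^m$, a routine linear-algebra fact valid for any $m \ge 2$ (and the $m=1$ sub-case is handled by the degenerate branches of the case analysis). The structure of the proof is therefore: set up the five cases, reduce each to Theorem~\ref{th:RLC2} via an invertible embedding, and collate the probabilities; the entire argument is essentially a clean corollary rather than an independent computation.
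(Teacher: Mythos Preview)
Your proposal is correct and is exactly the intended derivation: the paper does not give a separate proof of Corollary~\ref{co:RLC} but merely states that it ``follows immediately'' from Theorem~\ref{th:RLC2}, and your row-extension argument is the natural way to make that immediacy explicit.
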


% Generated by IEEEtranS.bst, version: 1.13 (2008/09/30)

\begin{biographynophoto}{Shengtian Yang}
(S'05--M'06) was born in Hangzhou, Zhejiang, China, in 1976. He received the B.S. and M.S. degrees in biomedical engineering, and the Ph.D. degree in electrical engineering from Zhejiang University, Hangzhou, China in 1999, 2002, and 2005, respectively.

From June 2005 to December 2007, he was a Postdoctoral Fellow at the Department of Information Science and Electronic Engineering, Zhejiang University. From December 2007 to January 2010, he was an Associate Professor at the Department of Information Science and Electronic Engineering, Zhejiang University. Currently, he is a self-employed Independent Researcher in Hangzhou, China. His research interests include information theory, coding theory, and design and analysis of algorithms.
\end{biographynophoto}

\end{document}